\newcommand{\figsize}{0.92\linewidth}
\definecolor{colorref}{rgb}{0.4648,0,0} 
\definecolor{colorcite}{rgb}{0,0.2902,0.1765}
\DeclareMathOperator*\argmax{arg \, max}		
\newtheorem{proposition}{Proposition}
\newtheorem{lemma}{Lemma}
\newtheorem{Cor}{Corollary}
\newcommand{\setn}{\mathcal{N}}
\newcommand{\setm}{\mathcal{M}}
\newcommand{\Rmnum}[1]{\uppercase\expandafter{\romannumeral #1}}
\newcommand{\psignal}[1]{\boldsymbol{#1}^\prime}
\newcommand{\signal}[1]{{\boldsymbol{#1}}}
\newcommand{\dprime}{{\prime\prime}}
\newcommand{\real}{{\mathbb R}}
\newtheorem{definition}{Definition}
\newtheorem{fact}{Fact}
\newtheorem{problem}{Problem}
\newtheorem{example}{Example}
\newcommand{\Natural}{{\mathbb N}}
\newcommand{\refeq}[1]{(\ref{#1})}
\newif\iftodo   
\newif\iftodoshort  
\newenvironment{proof}{{\it Proof:}}{\hfill$\square$\\}
\renewcommand{\figsize}{0.9\linewidth}
\begin{document}

\title{Fundamental properties of solutions to utility maximization problems in wireless networks}

\author{ R.~L.~G.~Cavalcante and S.~Sta\'nczak \\ {Fraunhofer Heinrich Hertz Institute and Technical University of Berlin \\ 
	{\small email: \{renato.cavalcante,~slawomir.stanczak\}@hhi.fraunhofer.de } }}

\maketitle
\setlength{\belowdisplayskip}{3pt} \setlength{\belowdisplayshortskip}{3pt}
\setlength{\abovedisplayskip}{3pt} \setlength{\abovedisplayshortskip}{3pt}

\begin{abstract} 
	We introduce a unified framework for the study of the utility and the energy efficiency of solutions to a large class of weighted max-min utility maximization problems in interference-coupled wireless networks. In more detail, given a network utility maximization problem parameterized by a maximum power budget $\bar{p}$ available to network elements, we define two functions that map the power budget $\bar{p}$ to the energy efficiency and to the utility achieved by the solution. Among many interesting properties, we prove that these functions are continuous and monotonic. In addition, we derive bounds revealing that the solutions to utility maximization problems are characterized by a low and a high power regime. In the low power regime, the energy efficiency of the solution can decrease slowly as the power budget increases, and the network utility grows linearly at best. In contrast, in the high power regime, the energy efficiency typically scales as $\Theta(1/\bar{p})$ as $\bar{p}\to\infty$, and the network utility scales as $\Theta(1)$. We apply the theoretical findings to a novel weighted rate maximization problem involving the joint optimization of  the uplink power and the base station assignment. 
\end{abstract}

\section{Introduction}
To cope with the ever increasing rate demand of wireless networks in a cost effective way, system engineers need to improve the energy efficiency, which often translates to increasing the rates for a given power budget. This fact has motivated many studies on trade-offs between achievable rates and energy efficiency for many years \cite{Gallager88,Verdu02,renato14SPM}. In particular, the field of information theory has been fundamental to reveal bounds that cannot be exceeded irrespective of the available computational power \cite{Gallager88,Verdu02}. Unfortunately, extending existing information theoretic results to general wireless networks, while capturing limitations of practical hardware and communication strategies, has been proven notoriously difficult. However, as we show in this study, useful and surprisingly simple performance bounds for a large class of communication strategies in wireless networks are available if we depart from the formal setting of information theory.

In practical wireless systems, the parameters of a network configuration are often obtained by solving optimization problems \cite{nuz07,chiang2008power,slawomir09,martin11,renato2016maxmin,cai2011,cai2012max,huang2013joint,sun2014,hong2014,tan2016optimal,foschini1993}. In particular, it is well-known that many weighted max-min rate or signal-to-interference-plus-noise ratio (SINR) maximization problems can be posed as conditional eigenvalue problems involving nonlinear mappings  \cite{nuz07,renato2016maxmin,cai2011,cai2012max,huang2013joint,sun2014,hong2014,tan2016optimal}. The practical implication of this observation is that these maximization problems can be solved with simple iterative fixed point algorithms similar to the standard power method in linear algebra \cite{krause1986perron,krause01}. One of the first studies to establish the connection between nonlinear conditional eigenvalue problems and utility maximization in wireless networks is shown in \cite{nuz07}. Later results appeared  in, to cite a few, \cite{renato2016maxmin,cai2011,cai2012max,huang2013joint,sun2014,hong2014}, which considered utility optimization problems assuming different interference models.

Building upon the findings in \cite{nuz07}, we start by explicitly stating a canonical problem that is solved in many of the applications addressed in  \cite{renato2016maxmin,cai2011,cai2012max,huang2013joint,sun2014,hong2014,nuz07}. Unlike these previous studies, which mostly focus on developing efficient numerical solvers or on posing the utility maximization problems as conditional eigenvalue problems, the objective of this study is to derive properties of the solutions to the canonical problem. Particular emphasis is devoted to properties that provide us with highly valuable insights into the energy efficiency and the utility of networks.

In more detail, given the large class of transmission strategies covered by the canonical problem, we can only evaluate the energy efficiency or the utility achieved by the solution after solving the canonical problem with iterative algorithms. This process can be time consuming, so we exploit  properties of the solution to conditional eigenvalue problems and results on asymptotic or recession functions in convex analysis \cite{aus03} to derive simple and useful bounds on the network utility. These bounds are then used to derive novel bounds on the energy efficiency. 

The above results reveal interesting phenomena (some already observed in particular interference models \cite{song2007network}) that are common to all network utility maximization problems that can be written in the canonical form shown here. More specifically, the solutions are typically characterized by two power regimes: a low power regime and a high power regime. In both regimes, the network utility and the energy efficiency are always monotonically increasing and non-increasing, respectively, as a function of the power budget $\bar{p}$ available to the transmitters. However, in the low power regime, the energy efficiency is bounded by a constant, and it can decrease slowly as we increase the power budget. In contrast, the network utility is upper bounded by a linear function. In the high power regime, the energy efficiency shows a fast decay because it typically scales as $\Theta(1/\bar{p})$ as $\bar{p}\to\infty$, whereas gains in network utility saturate because the network utility scales as $\Theta(1)$ as $\bar{p}\to\infty$ (see Sect.~\ref{sect.preliminaries} for the definition of $\Theta$). The bounds derived here do not depend on any unknown constants, so the power budget characterizing the boundary of the power regions is precisely known. In addition, we show that the spectral radius of lower bounding matrices (a concept introduced in \cite{renato2016}) provides us with a formal means of identifying {\it interference limited networks}, which we define as networks for which the utility cannot grow unboundedly as the power budget diverges to infinity. We also use the concept of recession functions in convex analysis to characterize networks for which the utility can grow unboundedly with increasing power budget, and we call these networks {\it noise limited networks}. 

We illustrate the above theoretical findings in a novel joint uplink power control and base station assignment problem for weighted rate maximization. This application is related to that in \cite{sun2014}, but here we use results shown in \cite{cai2012optimal}, which have been independently obtained in \cite{renato14SIP,renato2016power} in the context of load coupled interference models, to pose the optimization problem in terms of achievable rates instead of the SINR. As a result, we work directly with the variables of interest to system designers (in contrast, note that maximizing the SINR is only an indirect approach to the problem of improving the rates). We emphasize that solving weighted rate maximization problems by choosing appropriate coefficients for weighted SINR maximization problems may not be straightforward because the bijective relation between rate and SINR used in many studies is not affine. One interesting consequence of our novel formulation is that the simple solver based on the fixed point algorithm in \cite{krause1986perron,krause01} becomes readily available. Furthermore, this application exemplifies the validity of the theoretical findings with interference models based on concave mappings that are neither affine nor differentiable. 

This study is structured as follows. In Sect.~\ref{sect.preliminaries} we review definitions and known mathematical tools that are extensively used to prove the main results in this study. In Sect.~\ref{sect.ee} we introduce a new framework for the study of the energy efficiency and the achievable utility of solutions to a large class of network utility maximization problems. The general results obtained in Sect.~\ref{sect.ee} are illustrated with a concrete application in Sect.~\ref{sect.examples}.

\section{Preliminaries}
\label{sect.preliminaries}
 The intention of this section is twofold. First, we try to make this study as self-contained as possible by presenting many standard definitions and results that are essential for the proofs in the next sections. Second, we clarify much of the notation used throughout this study. We note that much of the background material collected here has been taken directly from \cite{renato2016,renato2016maxmin}. In this section, we also show the first (minor) technical result (Proposition~\ref{proposition.affine_bound}).

In more detail, for given $(\signal{x},\signal{y})\in\real^N\times\real^N$, the inequality $\signal{x}\le\signal{y}$ should be understood as a entry-wise inequality. The transpose of vectors or matrices is denoted by $(\cdot)^t$. The sets $\real_+$ and $\real_{++}$ are the sets of non-negative reals and positive reals, respectively. The spectral radius of a matrix $\signal{M}\in\real^{N\times N}$ is denoted by $\rho(\signal{M})$. The effective domain of a function $f:\real^N\to\real\cup\{-\infty\}$ is $\mathrm{dom} f :=\{\signal{x}\in\real^N~|~f(\signal{x})>-\infty\}$. Given two functions $f:\real_+\to\real_+$ and $g:\real_+\to\real_+$ we say that $f$ scales as $\Theta(g(x))$ when $x\to\infty$ (or, in set notation, $f(x)\in\Theta(g(x))$ as $x\to\infty$) if 
\begin{multline*}
(\exists k_0\in\real_{++}) (\exists k_1\in\real_{++}) (\exists k_3\in\real_{++})(\forall x\in\real_+) \\
 x\ge k_0 \Rightarrow k_1 g(x)\le f(x)\le k_2 g(x).
\end{multline*}
If $g$ is a constant function, then we use the convention $f(x)\in\Theta(1)$.

We use the notation ${\mathrm{conv}}~C$ to indicate the \emph{convex hull} of $C\subset\real^N$; i.e., the smallest convex subset of $\real^N$ containing $C$ \cite[p. 43]{baus11}. The \emph{interior} of $C \subset \real^N$ is the set given by  $\mathrm{int}~C:=\{\signal{x}\in C~|~(\exists\epsilon\in\real_{++})~B(\signal{x}; \epsilon)\subset C\}$ \cite[p. 90]{baus11}, where $B(\signal{x};\epsilon):=\{\signal{y}\in\real^N~|~\|\signal{x}-\signal{y}\|_2\le \epsilon\}$ is the closed ball centered at $\signal{x}$ with radius $\epsilon>0$ and $\|\cdot\|_2$ is the standard Euclidean norm. A set $C\subset\real_{+}^N$ is said to be \emph{downward comprehensible} on $\real_+^N$ if $(\forall\signal{x}\in C)(\forall\signal{y}\in \real^N_+)~ \signal{y}\le \signal{x}\Rightarrow \signal{y}\in C$ \cite[p. 30]{martin11}. A convex set $C\subset\real^N$ is \emph{symmetric} if $\signal{x}\in C$ implies $-\signal{x}\in C$, and it is called a \emph{convex body} if it is a compact convex set with nonempty interior \cite[Ch.~1]{ver2011}.

We say that a mapping $T:\real_+^N \to \real^N$ is concave if 
\begin{multline*}
(\forall \signal{x}\in \real_+^N)(\forall \signal{y}\in \real_+^N)(\forall \alpha\in~]0,1[)\\ 
T(\alpha \signal{x}+(1-\alpha)\signal{y})\ge \alpha T(\signal{x})+(1-\alpha)T(\signal{y}).
\end{multline*} 
As shown below, positive concave mappings $T:\real_{+}^N\to\real_{++}^N$ are instances of standard interference functions, which are functions with many applications in wireless networks \cite{yates95}. A simple proof of the following fact can be seen in \cite{renato2016}, among other studies.

\begin{fact}  
	\label{fact.concave_mappings}
	Let $T:\real_+^N\to\real_{++}^N$ be a concave mapping. Then each of the following holds:
	\begin{enumerate}
		\item ({\it Scalability}) $(\forall\signal{x}\in\real^N_+)$ ($\forall \mu >1$) $\mu {T}(\signal{x})>T(\mu\signal{x})$. \par
		\item ({\it Monotonicity}) $(\forall\signal{x}_1\in\real^N_+)$ $(\forall\signal{x}_2\in\real^N_+)$ $\signal{x}_1\ge \signal{x}_2 \Rightarrow T(\signal{x}_1)\ge T(\signal{x}_2)$.
	\end{enumerate}
\end{fact}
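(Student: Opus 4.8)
The plan is to derive both claims directly from the defining concavity inequality, exploiting crucially that the codomain is $\real_{++}^N$ rather than merely $\real_+^N$. I treat scalability first, since it is the shorter argument, and then monotonicity.

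For scalability, I would fix $\signal{x}\in\real_+^N$ and $\mu>1$. Since $1/\mu\in\,]0,1[$, I write $\signal{x}$ as the convex combination $\signal{x}=\frac{1}{\mu}(\mu\signal{x})+(1-\frac{1}{\mu})\signal{0}$ of the two points $\mu\signal{x}\in\real_+^N$ and $\signal{0}\in\real_+^N$. Applying the concavity inequality to this combination yields $T(\signal{x})\ge\frac{1}{\mu}T(\mu\signal{x})+(1-\frac{1}{\mu})T(\signal{0})$. Because $T(\signal{0})\in\real_{++}^N$ and $1-1/\mu>0$, the last term is entrywise strictly positive, so the inequality sharpens to $T(\signal{x})>\frac{1}{\mu}T(\mu\signal{x})$; multiplying through by $\mu>0$ gives $\mu T(\signal{x})>T(\mu\signal{x})$, as claimed.

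For monotonicity, I would fix $\signal{x}_1\ge\signal{x}_2$ in $\real_+^N$ and an arbitrary coordinate $j$, and restrict the $j$-th component $T_j$ to the ray $\signal{x}_2+t(\signal{x}_1-\signal{x}_2)$ for $t\ge0$. Since $\signal{x}_1-\signal{x}_2\ge\signal{0}$ and $\signal{x}_2\ge\signal{0}$, this ray remains in $\real_+^N$, so the scalar map $g(t):=T_j(\signal{x}_2+t(\signal{x}_1-\signal{x}_2))$ is well defined, concave, and strictly positive for every $t\ge0$. Arguing by contradiction, suppose $g(1)<g(0)$, i.e. $T_j(\signal{x}_1)<T_j(\signal{x}_2)$. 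Writing $1$ as the convex combination $\frac{t-1}{t}\cdot 0+\frac{1}{t}\cdot t$ for $t>1$ and invoking concavity gives $g(t)\le t\,(g(1)-g(0))+g(0)$, whose right-hand side tends to $-\infty$ as $t\to\infty$ because $g(1)-g(0)<0$. This contradicts $g(t)>0$, so $g(1)\ge g(0)$; as $j$ is arbitrary, $T(\signal{x}_1)\ge T(\signal{x}_2)$.

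The scalability part is essentially immediate once the decomposition is written down, and it is precisely the strict positivity $T(\signal{0})\in\real_{++}^N$ that upgrades the concavity inequality to a strict one. I expect the main obstacle to be monotonicity, whose crux is the contradiction step: the only reason a strictly decreasing concave scalar function cannot exist here is that it would eventually be driven below every bound, which is incompatible with $g$ being valued in $\real_{++}$. Both claims therefore hinge on the codomain, with strict positivity supplying the strict inequality in scalability and boundedness below supplying the divergence contradiction in monotonicity.
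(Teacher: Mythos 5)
Your proof is correct. The paper does not prove this fact itself but defers to the cited reference \cite{renato2016}, and your argument is essentially the standard one given there: scalability from the convex combination $\signal{x}=\tfrac{1}{\mu}(\mu\signal{x})+(1-\tfrac{1}{\mu})\signal{0}$ with strict positivity of $T(\signal{0})$, and monotonicity from the fact that a concave function that decreases along a ray in $\real_+^N$ must eventually become negative, contradicting the codomain $\real_{++}^N$ (the reference phrases this as a limit $\alpha\to 0^+$ over the same convex combination rather than a contradiction, but the mechanism is identical).
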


In the next section, we extensively exploit the close connection between a large class of utility maximization problems in wireless networks and conditional eigenvalues of concave mappings. Before stating the conditional eigenvalue problem, we introduce the definition of monotone norms used in \cite{krause01,krause1986perron}, and we refer the reader to \cite{john91} for nonequivalent notions of monotonicity that are also common in the literature. 

\begin{definition}
	\label{definition.monotone_norm}
(Monotone norm) A norm $\|\cdot \|$ on $\real^N$ is said to be monotone if 
\begin{align*}
(\forall \signal{x}\in\real^N_+)(\forall \signal{y}\in\real^N_+) \quad \signal{x}\ge\signal{y} \Rightarrow \|\signal{x}\|\ge\|\signal{y}\|.
\end{align*}
\end{definition}

Note that the widely used $l_p$ norms, $1\le p \le \infty$, are monotone in the sense of Definition~\ref{definition.monotone_norm}. 

\begin{fact}
	\label{fact.norm_equiv}
	(\cite[Ch. 13.5]{micheal07} Equivalence of norms in finite dimensional spaces) Let $\|\cdot\|_\alpha$ and $\|\cdot\|_\beta$ be arbitrary norms defined on $\real^N$. Then $(\exists k_1\in\real_{++})(\exists k_2\in\real_{++})(\forall \signal{x}\in\real^N)\quad k_1 \|\signal{x}\|_\alpha \le \|\signal{x}\|_\beta \le k_2 \|\signal{x}\|_\alpha$.
\end{fact}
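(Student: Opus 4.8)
The plan is to reduce the general claim to a comparison against a single reference norm and then invoke compactness. Since norm equivalence is a transitive and symmetric relation, it suffices to show that an arbitrary norm $\|\cdot\|$ on $\real^N$ is equivalent to the Euclidean norm $\|\cdot\|_2$; chaining two such equivalences, $\|\cdot\|_\alpha$ against $\|\cdot\|_2$ and $\|\cdot\|_2$ against $\|\cdot\|_\beta$, then produces constants relating $\|\cdot\|_\alpha$ and $\|\cdot\|_\beta$ directly, after eliminating the intermediate $\|\cdot\|_2$.

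First I would establish the upper bound. Writing $\signal{x}=\sum_{i=1}^N x_i \signal{e}_i$ in the standard basis $\{\signal{e}_i\}_{i=1}^N$, the triangle inequality and absolute homogeneity give $\|\signal{x}\|\le\sum_{i=1}^N |x_i|\,\|\signal{e}_i\|\le (\max_i\|\signal{e}_i\|)\sum_{i=1}^N |x_i|$, and bounding $\sum_{i=1}^N |x_i|\le\sqrt{N}\,\|\signal{x}\|_2$ via the Cauchy--Schwarz inequality yields $\|\signal{x}\|\le k_2\|\signal{x}\|_2$ with $k_2:=\sqrt{N}\max_i\|\signal{e}_i\|\in\real_{++}$. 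A useful byproduct of this step, obtained from the reverse triangle inequality $|\,\|\signal{x}\|-\|\signal{y}\|\,|\le\|\signal{x}-\signal{y}\|\le k_2\|\signal{x}-\signal{y}\|_2$, is that the map $\signal{x}\mapsto\|\signal{x}\|$ is (Lipschitz) continuous with respect to $\|\cdot\|_2$.

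The lower bound is the heart of the argument. I would consider the unit sphere $S:=\{\signal{x}\in\real^N~|~\|\signal{x}\|_2=1\}$, which is closed and bounded, hence compact in $\real^N$ by the Heine--Borel theorem. Since $\signal{x}\mapsto\|\signal{x}\|$ is continuous on the compact set $S$, the extreme value theorem guarantees that it attains a minimum $k_1:=\min_{\signal{x}\in S}\|\signal{x}\|$. Positive definiteness of the norm together with $\signal{0}\notin S$ forces $k_1\in\real_{++}$. For any nonzero $\signal{x}$, applying this bound to $\signal{x}/\|\signal{x}\|_2\in S$ and using homogeneity gives $\|\signal{x}\|\ge k_1\|\signal{x}\|_2$, while the case $\signal{x}=\signal{0}$ is trivial.

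The main obstacle is precisely this lower bound, where the finite dimensionality of $\real^N$ is indispensable: it is what makes the closed unit sphere compact and thereby permits the extreme value theorem to deliver a strictly positive infimum rather than merely a nonnegative one. Combining the two bounds yields $k_1\|\signal{x}\|_2\le\|\signal{x}\|\le k_2\|\signal{x}\|_2$ for every $\signal{x}\in\real^N$; applying this separately to $\|\cdot\|_\alpha$ and $\|\cdot\|_\beta$ and then cancelling $\|\cdot\|_2$ delivers the asserted constants relating the two arbitrary norms.
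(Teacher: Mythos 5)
Your proof is correct and complete: the reduction to a single comparison against the Euclidean norm, the triangle-inequality/Cauchy--Schwarz upper bound, and the compactness-plus-extreme-value-theorem argument for the strictly positive lower bound constitute the standard proof of norm equivalence in finite dimensions. The paper itself offers no proof of this fact --- it is stated as a cited result from the reference \cite[Ch.~13.5]{micheal07} --- and your argument is exactly the one found there, so there is nothing to reconcile.
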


Recall that a sequence $(\signal{x}_n)_{n\in\Natural}\subset \real^N$ is said to converge to  $\signal{x}\in\real^N$ if the sequence  $(\|\signal{x}_n-\signal{x}\|)_{n\in\Natural}\subset\real_+$ converges to zero (by Fact~\ref{fact.norm_equiv}, the convergence holds for any choice of the norm $\|\cdot\|$). In this case, we write $\lim_{n\to\infty}\signal{x}_n=\signal{x}$. We can now formally introduce the conditional eigenvalue problem and a simple iterative solver:

\begin{fact}
\label{fact.krause} 
(\cite{krause1986perron,krause01}) Let $T:\real_{+}^N\to\real_{++}^N$ be a concave mapping and $\|\cdot\|$ a monotone norm. Then each of the following holds:
\begin{enumerate}
\item There exists a unique solution $(\signal{x}^\star, \lambda^\star)\in\real_{++}^N\times\real_{++}$ to the conditional eigenvalue problem
\begin{problem}
	\label{problem.cond_eig}
	Find $(\signal{x}, \lambda)\in\real_{+}^N\times\real_{+}$ such that $T(\signal{x})=\lambda\signal{x}$ and $\|\signal{x}\|=1$.
\end{problem}
For reference, the scalar $\lambda^\star$ is said to be the conditional eigenvalue of $T$ for the norm $\|\cdot\|$.
\item The sequence $(\signal{x}_n)_{n\in\Natural}\subset\real_{+}^N$ generated by 
\begin{align}
\label{eq.krause_iter}
\signal{x}_{n+1} = T^\prime({\signal{x}_n}):=\dfrac{1}{\|T(\signal{x}_n)\|}T(\signal{x}_n),\quad\signal{x}_1\in\real_{+}^N,
\end{align}
converges geometrically to the uniquely existing vector $\signal{x}^\star\in\mathrm{Fix}(T^\prime):=\{\signal{x}\in\real_{+}^N~|~\signal{x}=T^\prime(\signal{x})\}$, which is also the vector $\signal{x}^\star$ of the tuple $(\signal{x}^\star,\lambda^\star)$ that solves Problem~\ref{problem.cond_eig}. Furthermore, the sequence $(\lambda_n:=\|T(\signal{x}_n)\|)_{n\in\Natural}\subset\real_{++}$ satisfies  $\lim_{n\to\infty}\lambda_n=\lambda^\star$.
\end{enumerate}
\end{fact}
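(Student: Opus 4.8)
The statement is the nonlinear Perron--Frobenius theorem for concave self-maps of the positive cone, and my plan is to study it through an order-theoretic (projective) metric. I would recast Problem~\ref{problem.cond_eig} as the search for a ray fixed by $T$: since $T(\signal{x})=\lambda\signal{x}$ constrains only the direction of $\signal{x}$, the normalization $\|\signal{x}\|=1$ merely selects one representative per ray, which is exactly the fixed point of the normalized map $T^\prime$. I would endow the open cone $\real_{++}^N$ with Hilbert's projective metric $d(\signal{x},\signal{y}):=\log\big(\max_i (x_i/y_i)\big)-\log\big(\min_i (x_i/y_i)\big)$, which is scale-invariant, makes the set of rays complete, and places the boundary of the cone at infinite distance. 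Two preliminary observations make $T^\prime$ well behaved: each component of a finite-valued concave map is continuous on the interior of its domain, so $T$ is continuous on $\real_{++}^N$; and $T(\real_+^N)\subset\real_{++}^N$ guarantees that every iterate lies in $\real_{++}^N$ after the first step and that $\lambda_n:=\|T(\signal{x}_n)\|>0$.

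The heart of the argument is that $T$ contracts $d$, and this is precisely where Fact~\ref{fact.concave_mappings} is used. Writing $a:=\min_i(x_i/y_i)$ and $A:=\max_i(x_i/y_i)$ we have $a\signal{y}\le\signal{x}\le A\signal{y}$ with $d(\signal{x},\signal{y})=\log(A/a)$; applying monotonicity gives $T(a\signal{y})\le T(\signal{x})\le T(A\signal{y})$, and applying scalability (with $\mu=A>1$ and $\mu=1/a>1$, after choosing the representative so that $a\le1\le A$) yields the strict sandwich $a\,T(\signal{y})<T(\signal{x})<A\,T(\signal{y})$. Hence $d(T(\signal{x}),T(\signal{y}))<\log(A/a)=d(\signal{x},\signal{y})$ for non-proportional $\signal{x},\signal{y}$. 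This strict contraction immediately yields uniqueness: any two conditional eigenpairs $(\signal{x},\lambda)$ and $(\signal{y},\mu)$ satisfy $d(\signal{x},\signal{y})=d(T(\signal{x}),T(\signal{y}))$ because $d$ ignores the scalars $\lambda,\mu$, forcing $\signal{x}\propto\signal{y}$, and then $\|\signal{x}\|=\|\signal{y}\|=1$ forces $\signal{x}=\signal{y}$ and $\lambda=\mu$. For existence I would apply Brouwer's theorem to the continuous self-map $T^\prime$ of the compact set $\{\signal{x}\in\real_+^N:\|\signal{x}\|=1\}$, producing $\signal{x}^\star\in\fix(T^\prime)$ and $\lambda^\star:=\|T(\signal{x}^\star)\|$.

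The delicate point, which I expect to be the main obstacle, is that the contraction above is strict but not \emph{uniform}, so it yields convergence but not the claimed \emph{geometric} rate. To upgrade it I would quantify scalability using concavity and positivity: comparing $\signal{x}$ with the endpoints $\mu\signal{x}$ and $0$ gives $T(\mu\signal{x})\le\mu T(\signal{x})-(\mu-1)\,T(0)$ with $T(0)\in\real_{++}^N$. Setting $\alpha_n:=\max_i (x_n)_i/x^\star_i$ and $\beta_n:=\min_i (x_n)_i/x^\star_i$, the monotonicity of the norm is exactly what guarantees $\beta_n\le1\le\alpha_n$ (from $\beta_n\signal{x}^\star\le\signal{x}_n\le\alpha_n\signal{x}^\star$ and $\|\signal{x}_n\|=\|\signal{x}^\star\|=1$). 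Feeding the quantitative bound, together with $T(0)\ge\gamma\signal{x}^\star$ for $\gamma:=\min_i T_i(0)/x^\star_i>0$, into this sandwich produces a M\"obius-type recursion $\alpha_{n+1}/\beta_{n+1}\le(\alpha_n+\kappa)/(\beta_n+\kappa)$ with $\kappa:=\gamma/(\lambda^\star-\gamma)>0$. Since $\beta_n\le1$, this gives $\alpha_{n+1}/\beta_{n+1}-1\le\tfrac{1}{1+\kappa}\,(\alpha_n/\beta_n-1)$, i.e. $d(\signal{x}_n,\signal{x}^\star)$ decays geometrically, and continuity of $T$ and $\|\cdot\|$ then gives $\lambda_n\to\lambda^\star$. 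The crux is thus the production of the uniform factor $c:=1/(1+\kappa)<1$, which the quantitative concavity estimate $T(\mu\signal{x})\le\mu T(\signal{x})-(\mu-1)T(0)$ supplies once the orbit is confined to a compact subset of $\real_{++}^N$ so that $\gamma$ can be taken uniform.
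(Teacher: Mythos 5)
The paper does not prove Fact~\ref{fact.krause}; it is imported verbatim from \cite{krause1986perron,krause01}, so there is no in-paper argument to measure your proposal against. Taken on its own terms, your reconstruction of the nonlinear Perron--Frobenius argument is essentially sound: the strict contraction of Hilbert's projective metric does follow from monotonicity and scalability (Fact~\ref{fact.concave_mappings}), uniqueness follows from scale-invariance of the metric, and the quantitative concavity estimate $T(\mu\signal{x})\le\mu T(\signal{x})-(\mu-1)T(\signal{0})$ combined with $T(\signal{0})\ge\gamma\signal{x}^\star$, $\gamma:=\min_i\,[T(\signal{0})]_i/x_i^\star>0$, does yield the recursion $\alpha_{n+1}/\beta_{n+1}\le(\alpha_n+\kappa)/(\beta_n+\kappa)$ and hence the uniform factor $1/(1+\kappa)$; the step $\alpha_{n+1}/\beta_{n+1}-1\le(\alpha_n/\beta_n-1)/(1+\kappa)$ is exactly where $\beta_n\le 1$, i.e.\ the monotonicity of the norm, enters, and the algebra checks out. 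This is a genuinely self-contained route where the paper offers only a citation, and it correctly isolates positivity of $T(\signal{0})$ as the source of the geometric rate.

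Two points need repair. First, the existence step applies Brouwer's theorem to $T^\prime$ on $\{\signal{x}\in\real_+^N:\|\signal{x}\|=1\}$, but that set is neither convex nor contained in $\real_{++}^N$, and a finite concave map is only guaranteed continuous on the \emph{interior} of its domain (the Fact, as stated, assumes no continuity up to the boundary). The fix is standard: since $T(\signal{0})\le T(\signal{x})\le T(\beta\signal{1})$ on the unit ball (with $\beta$ as in Fact~\ref{fact.norm_equiv}), $T^\prime$ maps the compact \emph{convex} set $C_\delta:=\{\signal{x}\in\real_+^N:\|\signal{x}\|\le 1,\ \signal{x}\ge\delta T(\signal{0})\}$ into itself for $\delta:=1/\|T(\beta\signal{1})\|$; $C_\delta$ sits inside $\real_{++}^N$ where $T$ is continuous, and any fixed point automatically has unit norm. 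Second, your closing concern that the orbit must be confined to a compact set ``so that $\gamma$ can be taken uniform'' is a red herring: $\gamma$ depends only on $T(\signal{0})$ and $\signal{x}^\star$, not on the iterates, and the iterates lie in $\real_{++}^N$ from $n=2$ onward because $T$ maps into $\real_{++}^N$. The only degenerate case is $\gamma=\lambda^\star$ (which forces $T$ to be constant on the order interval $[\signal{0},\signal{x}^\star]$), and it is handled by replacing $\gamma$ with any smaller positive constant. A last cosmetic remark: when the normalization gives $a=1$ or $A=1$ the corresponding side of your sandwich is only non-strict, but the other side remains strict for non-proportional vectors, so the conclusion $d(T(\signal{x}),T(\signal{y}))<d(\signal{x},\signal{y})$ survives.
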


To find a simple lower bound for the conditional eigenvalue $\lambda^\star$ of a concave mapping $T$ for any monotone norm, we can use the concept of lower bounding matrices introduced in \cite{renato2016}. 

\begin{definition}
\label{def.bmatrix}
(\cite{renato2016} Lower bounding matrices) Let $T:\real_{+}^N\to\real_{++}^N$ be a continuous concave mapping, and denote by $t_1:\real_+^N\to\real_{++},~\ldots, t_N:\real_+^N\to\real_{++}$ the continuous concave functions such that $[t_1(\signal{x}),\ldots,t_N(\signal{x})]:={T}(\signal{x})$ for every $\signal{x}\in\real_{+}^N$. The lower bounding matrix $\signal{M}\in\real_{+}^{N\times N}$ of the mapping $T$ is the matrix with its $i$th row and $j$th column given by\footnote{See \cite[Example 1]{renato2016} for an alternative and equivalent construction method based on supergradients.}
\begin{align}
\label{eq.lbmatrix}
[\signal{M}]_{i,j} = \lim_{h\to 0^+} h~t_i(h^{-1}\signal{e}_j) \in \real_+,
\end{align}
where the limit can be shown to exist and $\signal{e}_j$ ($j\in\{1,\ldots,N\}$) is the unit vector with all components being zero, except for the $j$th component, which is equal to one.
\end{definition}

Lower bounding matrices are at the heart of many of the results in this study because of the next result:

\begin{fact}
	\label{fact.eigbound} (\cite{renato2016maxmin})
	Let $\signal{M}\in\real^{N\times N}_{+}$ be the lower bounding matrix of a continuous  concave mapping $T:\real_{+}^N\to\real_{++}^N$. In addition, let $(\signal{x}^\star, \lambda^\star)\in\real_{++}^N\times\real_{++}$ be the solution to Problem~\ref{problem.cond_eig} for an arbitrary monotone norm. Then we have $\lambda^\star > \rho(\signal{M})$.
\end{fact}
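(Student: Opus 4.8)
The plan is to prove the bound in two moves: first, to show that the nonnegative matrix $\signal{M}$ lower bounds the mapping $T$ in a \emph{strict} sense along the positive orthant, and second, to turn this linear lower bound into a spectral-radius estimate via a Perron--Frobenius (Collatz--Wielandt) argument applied to the conditional eigenvector.

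For the first move, I would start from the construction (\ref{eq.lbmatrix}): each entry $[\signal{M}]_{i,j}=\lim_{h\to0^+}h\,t_i(h^{-1}\signal{e}_j)$ is precisely the value at $\signal{e}_j$ of the recession (asymptotic) function of the concave component $t_i$, a concept available from \cite{aus03}. Using that the difference quotient $t\mapsto(t_i(t\signal{x})-t_i(\signal{0}))/t$ is nonincreasing in $t>0$ and decreases to the recession value, together with the positive homogeneity and superadditivity of the recession function, I would obtain $t_i(\signal{x})\ge t_i(\signal{0})+[\signal{M}\signal{x}]_i$ for every $\signal{x}\in\real_+^N$. Since $T$ maps into $\real_{++}^N$ we have $t_i(\signal{0})>0$, so in fact $t_i(\signal{x})>[\signal{M}\signal{x}]_i$ for each $i$; equivalently, $T(\signal{x})>\signal{M}\signal{x}$ entrywise on $\real_+^N$. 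This is the lower-bounding property of $\signal{M}$ from \cite{renato2016}; the point worth isolating is the strictness, which is what will separate $\lambda^\star$ from $\rho(\signal{M})$.

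For the second move, I would evaluate at the conditional eigenpair. By Fact~\ref{fact.krause} the solution satisfies $\signal{x}^\star\in\real_{++}^N$ and $T(\signal{x}^\star)=\lambda^\star\signal{x}^\star$, so the strict bound gives $\signal{M}\signal{x}^\star<\lambda^\star\signal{x}^\star$ entrywise. As $\signal{x}^\star$ is strictly positive, dividing componentwise yields $[\signal{M}\signal{x}^\star]_i/x^\star_i<\lambda^\star$ for every $i$. I would then invoke the standard estimate that, for a nonnegative matrix $\signal{M}$ and a strictly positive vector $\signal{v}$, one has $\rho(\signal{M})\le\max_i[\signal{M}\signal{v}]_i/v_i$: conjugating by $D=\mathrm{diag}(\signal{v})$ leaves the spectrum unchanged while making the $i$th row sum of $D^{-1}\signal{M}D$ equal to $[\signal{M}\signal{v}]_i/v_i$, so that $\rho(\signal{M})=\rho(D^{-1}\signal{M}D)$ is bounded by the maximal row sum, i.e.\ the induced $\infty$-norm. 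Taking $\signal{v}=\signal{x}^\star$ gives $\rho(\signal{M})\le\max_i[\signal{M}\signal{x}^\star]_i/x^\star_i<\lambda^\star$, which is the claim.

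The main obstacle is producing the \emph{strict} inequality $T(\signal{x})>\signal{M}\signal{x}$ rather than merely $T(\signal{x})\ge\signal{M}\signal{x}$, and then verifying that this strictness survives the spectral step. The inequality itself is essentially recession-function bookkeeping (or a direct citation of \cite{renato2016}), but the strictness is exactly where the hypothesis $T:\real_+^N\to\real_{++}^N$, hence $t_i(\signal{0})>0$, is used. It then propagates cleanly because there are only finitely many coordinates, so a strict componentwise inequality forces the maximum $\max_i[\signal{M}\signal{x}^\star]_i/x^\star_i$ to lie strictly below $\lambda^\star$; note also that the Perron bound requires no irreducibility assumption on $\signal{M}$, since it follows from the induced-norm argument above.
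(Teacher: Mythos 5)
The paper does not prove this statement: it is imported verbatim as a Fact with a citation to \cite{renato2016maxmin}, so there is no in-paper proof to compare against. Judged on its own, your argument is correct and complete. The first move is sound: by \refeq{eq.recession_func} each entry $[\signal{M}]_{i,j}$ is the asymptotic function $(t_i)_\infty(\signal{e}_j)$ (the extension of $t_i$ by $-\infty$ off $\real_+^N$ has $\signal{0}$ in its domain, so the formula applies), the monotone difference quotient gives $t_i(\signal{x})-t_i(\signal{0})\ge (t_i)_\infty(\signal{x})$, and superadditivity plus positive homogeneity of $(t_i)_\infty$ give $(t_i)_\infty(\signal{x})\ge\sum_j x_j (t_i)_\infty(\signal{e}_j)=[\signal{M}\signal{x}]_i$; positivity of $t_i(\signal{0})$ then yields the strict entrywise bound $T(\signal{x})>\signal{M}\signal{x}$, which is exactly the defining property of the lower bounding matrix in \cite{renato2016}. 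The second move is the standard similarity/induced-$\infty$-norm form of the Collatz--Wielandt bound, which indeed requires no irreducibility, and the strictness survives because the maximum over finitely many ratios $[\signal{M}\signal{x}^\star]_i/x_i^\star$, each strictly below $\lambda^\star$, remains strictly below $\lambda^\star$. This is precisely the mechanism one would expect the cited reference to use (strict linear minorization at the conditional eigenvector, then a Perron-type estimate), so nothing further is needed.
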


By considering Definition~\ref{definition.recession} below, we can observe the strong connection between \refeq{eq.lbmatrix} and the concept of recession or asymptotic functions in convex analysis, which we use to study the behavior of networks in the high power regime. 

\begin{definition}
	\label{definition.recession}
	(Recession or asymptotic functions \cite[Ch. 2.5]{aus03}) Let $f:\real^N\to\real\cup\{-\infty\}$ be upper semicontinuous, proper, and concave. We define its recession or asymptotic function at $\signal{y}\in\real^N$ the function $f_\infty$ given by
	\begin{align*}
	(\forall \signal{x}\in\mathrm{dom}~ f)~~ f_\infty(\signal{y}):=\lim_{h\to\infty}\dfrac{f(\signal{x}+h\signal{y})-f(\signal{x})}{h}.
	\end{align*}
	The above limit can be more conveniently calculated by using \cite[Corollary 2.5.3]{aus03}
	\begin{align}
	\label{eq.recession_func}	
	(\forall \signal{y}\in\mathrm{dom}~ f)\quad 
	f_\infty(\signal{y})=\lim_{h\to 0^+}h f(h^{-1}\signal{y}),
	\end{align}
	and the equality in \refeq{eq.recession_func} is valid for every $\signal{y}\in\real^N$ if $\signal{0}\in\mathrm{dom}~f$. We also recall that asymptotic functions are \emph{positively homogeneous} \cite[Proposition 2.5.1(a)]{aus03}.
\end{definition}

We end this section with a simple result that is later used for the analysis of the energy efficiency. 
\begin{proposition}
	\label{proposition.affine_bound}
	Let $\|\cdot\|$ be an arbitrary norm and $T:\real_{+}^N\to\real_{++}^N$ an arbitrary continuous concave mapping. Then the following holds:
	\begin{align*}
	(\forall\signal{x}\in\real_{++}^N)(\exists k_1\in\real_{++})(\exists k_2\in\real_{++})(\forall\bar{p}\in\real_{+}) \\ 
	\|T(\bar{p}\signal{x})\| \le k_1\bar{p}+k_2.
	\end{align*}
\end{proposition}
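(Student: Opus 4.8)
The plan is to exploit the fact that each coordinate function of $T$ is concave and finite along the ray $\bar p\mapsto\bar p\signal{x}$, and to dominate it by an affine function built from a chord of its graph. First I would fix $\signal{x}\in\real_{++}^N$ and write $T=[t_1,\ldots,t_N]$, so that each scalar map $\phi_i:\real_+\to\real_{++}$, $\phi_i(\bar p):=t_i(\bar p\signal{x})$, is finite-valued and concave, being the composition of the concave function $t_i$ with the affine map $\bar p\mapsto\bar p\signal{x}$. I would also record that $\phi_i$ is nondecreasing: this follows from the monotonicity part of Fact~\ref{fact.concave_mappings}, since $\bar p_1\ge\bar p_2$ implies $\bar p_1\signal{x}\ge\bar p_2\signal{x}$.

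The key step is a chord estimate. For $\bar p\ge 1$, writing $1=\tfrac{1}{\bar p}\,\bar p+(1-\tfrac{1}{\bar p})\,0$ and applying concavity of $\phi_i$ at the points $\bar p$ and $0$ gives $\phi_i(1)\ge\tfrac{1}{\bar p}\phi_i(\bar p)+(1-\tfrac{1}{\bar p})\phi_i(0)$, which rearranges to
\begin{align*}
\phi_i(\bar p)\le (\phi_i(1)-\phi_i(0))\,\bar p+\phi_i(0),\qquad \bar p\ge 1.
\end{align*}
On the remaining range $\bar p\in[0,1]$ the same affine function already dominates $\phi_i$: monotonicity gives $\phi_i(\bar p)\le\phi_i(1)$, and since the slope $\phi_i(1)-\phi_i(0)$ is nonnegative and $\bar p\ge0$, one has $\phi_i(1)\le(\phi_i(1)-\phi_i(0))\,\bar p+\phi_i(1)$. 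Padding the constant from $\phi_i(0)$ to $\phi_i(1)$ therefore yields a single affine upper bound valid for all $\bar p\in\real_+$, with nonnegative slope $a_i:=t_i(\signal{x})-t_i(\signal{0})$ and positive intercept $b_i:=t_i(\signal{x})$.

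Finally I would pass from coordinates to the norm. By Fact~\ref{fact.norm_equiv} there is a constant $c\in\real_{++}$ with $\|\signal{z}\|\le c\sum_{i=1}^N|z_i|$ for all $\signal{z}\in\real^N$; since $T$ takes values in $\real_{++}^N$, the coordinates $t_i(\bar p\signal{x})$ are positive, so summing the coordinatewise bounds gives $\|T(\bar p\signal{x})\|\le c\sum_{i=1}^N(a_i\bar p+b_i)$. Collecting terms and enlarging the slope by one to make it strictly positive, I would set $k_1:=c\sum_{i=1}^N a_i+1$ and $k_2:=c\sum_{i=1}^N b_i$, both in $\real_{++}$, and the claimed inequality follows because $\bar p\ge 0$.

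The main point to be careful about is relying on an estimate that uses only the finite values $\phi_i(0)$ and $\phi_i(1)$, rather than a derivative or supergradient at the origin: a concave coordinate such as $x\mapsto\sqrt{x}$ can have infinite slope at the corner $\signal{0}$, so a tangent-line argument anchored there would break down, whereas the chord bound stays finite and needs no recession-function machinery. The only other minor issues are that the secant line from the origin upper-bounds $\phi_i$ only on $[1,\infty)$ (on $[0,1]$ it is a \emph{lower} bound by concavity), which is why monotonicity is invoked to cover the low range, and the cosmetic adjustment that guarantees $k_1,k_2\in\real_{++}$ rather than merely nonnegative.
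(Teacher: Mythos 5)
Your proof is correct. It shares the paper's overall skeleton — restrict $T$ to the ray $\bar{p}\mapsto\bar{p}\signal{x}$, dominate each scalar concave coordinate by an affine function of $\bar{p}$, then pass to the arbitrary norm via equivalence of norms — but the mechanism for producing the affine majorant is different. The paper picks an arbitrary interior point $c>0$, invokes nonemptiness of the superdifferential $\partial g_i(c)$ (a result from convex analysis) together with nonnegativity of supergradients of positive concave functions to get $g_i(\bar{p})\le\alpha_i\bar{p}+\beta_i$. You instead use a two-point chord inequality through $0$ and $1$, which upper-bounds $\phi_i$ only on $[1,\infty)$, and patch the interval $[0,1]$ using monotonicity of $T$ (Fact~\ref{fact.concave_mappings}.2) plus a padded intercept. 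Your route is more elementary — it needs only the definition of concavity and the monotonicity already established for positive concave mappings, with no appeal to superdifferential existence theorems — and your remark about why a tangent anchored at the origin would fail (infinite one-sided slope, as for $x\mapsto\sqrt{x}$) is exactly the pitfall the paper sidesteps by placing the supergradient at an interior point $c>0$ rather than at $0$. The paper's version is marginally slicker in that a single supporting line works on all of $\real_+$ at once; yours trades that for self-containedness. Both are complete and correct.
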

\begin{proof}
Fix $\signal{x}\in\real_{++}^N$ arbitrarily. Denote by $t_1:\real_+^N\to\real_{++},~\ldots, t_N:\real_+^N\to\real_{++}$ the continuous concave functions such that $[t_1(\signal{p}),\ldots,t_N(\signal{p})]:={T}(\signal{p})$ for every $\signal{p}\in\real_{+}^N$, and note that, for each $i\in\mathcal{N}:=\{1,\ldots,N\}$, the function $g_i:\real_{+}\to\real_{++}:\bar{p}\mapsto t_i(\bar{p}\signal{x})$ is also concave. Now choose $c\in\real_{++}$ arbitrarily. By \cite[Corollary 16.15]{baus11}, we know that $(\forall i \in \setn)~ \partial g_i(c) \neq \emptyset$, where $\partial g_i(c):=\{u\in\real~|~(\forall y\in\real_{+}) (y-c)u + f(c)\ge g_i(y) \}$ is the superdifferential of the concave function $g_i$ at $c$. For each $i\in\setn$, choose an arbitrary tuple $(\alpha_i,\alpha^\prime_i)\in\real\times\real$ satisfying  $\alpha_i>\alpha_i^\prime\in \partial g_i(c)$, and define $\beta_i:=g_{i}(c)>0$. The definition of the superdifferential and the fact that $\alpha_i^\prime\in\real_{+}$ (because $g_i$ is positive and concave; see, e.g.,  \cite[Lemma~1.1]{renato2016}) for every $i\in\setn$ shows that 
\begin{align}
\label{eq.ti_gi}
(\forall i\in\setn) (\forall \bar{p}\in\real_{+})~~ 0<t_i(\bar{p}\signal{x})=g_i(\bar{p}) \le \alpha_i \bar{p}+\beta_i.
\end{align}
 Let $(k_1^\prime, k_2^\prime)\in\real_{++}\times\real_{++}$ satisfy $k_1^\prime \ge \|\signal{\alpha}\|_m$ and $k_2^\prime \ge \|\signal{\beta}\|_m$, where $\signal{\alpha}:=[\alpha_1,\ldots,\alpha_N]\in\real_{++}^N$, $\signal{\beta}:=[\beta_1,\ldots,\beta_N]\in\real_{++}^N$, and $\|\cdot\|_m$ is an arbitrary monotone norm. By \refeq{eq.ti_gi}, the monotonicity of the norm $\|\cdot\|_m$, and the triangle inequality, we deduce
\begin{multline*}
(\forall\bar{p}\in\real_{+}) \|T(\bar{p}\signal{x})\|_m \le \|\bar{p}\signal{\alpha}+\signal{\beta}\|_m \\ \le \bar{p}\|\signal{\alpha}\|_m+\|\signal{\beta}\|_m\le k_1^\prime \bar{p} + k_2^\prime.
\end{multline*}
Since norms are equivalent in finite dimensional spaces (Fact~\ref{fact.norm_equiv}), we obtain
\begin{multline*}
(\exists\alpha\in\real_{++})(\forall\bar{p}\in\real_{+}) \\ \alpha  \|T(\bar{p}\signal{x})\| \le \|T(\bar{p}\signal{x})\|_m \le k_1^\prime \bar{p} + k_2^\prime,
\end{multline*}
which completes the proof as $\signal{x}\in\real_{++}^N$ is arbitrary.
\end{proof}
\section{Proposed framework}
\label{sect.ee}

\subsection{The canonical utility maximization problem}

In this study, we are interested in utility maximization problems that can be posed in the following canonical form:

\begin{problem}
\label{problem.canonical}
(Canonical form of the network utility maximization problem)
\begin{align}
\label{eq.canonical}
\begin{array}{lll}
\mathrm{maximize}_{\signal{p}, c} & c \\
 \mathrm{subject~to} & \signal{p}\in \mathrm{Fix}(cT):=\left\{\signal{p}\in\real_{+}^N~|~\signal{p}=cT(\signal{p})\right\} \\
 & \|\signal{p}\|_a \le \bar{p} \\
 & \signal{p}\in\real_{+}^N, c\in\real_{++},
\end{array}
\end{align}
where $\bar{p}\in \real_{++}$ is a design parameter hereafter called power budget,  $\|\cdot\|_a$ is an arbitrary monotone norm, and $T:\real_{+}^N\to\real_{++}^N$ is an arbitrary {\bf continuous} concave mapping called interference mapping in this study. 
\end{problem}

Following standard terminology, we say that a tuple $(\signal{p}, c)\in\real^N_{++}\times\real_{++}$ is feasible to Problem~\ref{problem.canonical} if it satisfies all constraints in \refeq{eq.canonical}. The set of all feasible tuples is defined to be the feasible set. If a feasible tuple is also a solution to Problem~\ref{problem.canonical}, then we say that this tuple is optimal.

 Problem~2 can be seen as a particular instance of that addressed in \cite{nuz07} (see \cite{cai2012optimal,tan2014wireless,zheng2016} for other notable extensions). However, Problem~\ref{problem.canonical} already covers a large array of network utility optimization problems, and, as we show below, its solution has a rich structure that, to the best of our knowledge, we explore for the first time here. Particular instances of Problem~\ref{problem.canonical} include max-min rate optimization in load coupled networks \cite{renato2016maxmin}, the joint optimization of the uplink power and the cell assignment \cite{sun2014}, the optimization of the uplink receive beamforming \cite[Sect.~1.4.2]{martin11}, and many of the applications described in \cite{cai2012optimal,tan2014wireless,zheng2016}. Later in Sect.~\ref{sect.examples} we show a novel weighted rate maximization problem that is also an instance of Problem~\ref{problem.canonical}.

  Typically, in network utility maximization problems written in the canonical form shown above, the optimization variable $\signal{p}$ corresponds to the transmit power of network elements (e.g., base stations or user equipment); the optimization variable $c$, hereafter called {\it utility}, is the common desired rate, or, depending on the problem formulation, the common SINR of all users; the norm $\|\cdot\|_a$ is chosen based on the energy source of the network elements (e.g., we can use the $l_1$ norm if all networks elements share the same source, or the $l_\infty$ norm if the network elements have independent sources); and $T$ is a known  mapping that captures the interference coupling among network elements. In particular, the mapping $T$ is constructed with information about many environmental and control parameters such as the pathgains, MIMO beamforming techniques, the user-base station assignment mechanisms, and the system bandwidth, to name a few. \emph{For concreteness, in the next sections  we use the above interpretation of the optimization variables to explain in words the implications of the main results in this study}. 
  
  In the next proposition, we show that the seemingly simple power constraints in Problem~\ref{problem.canonical} are equivalent to a rich class of constraints commonly found in applications in wireless networks. The next result is also useful to identify utility optimization problems that cannot be addressed with the formulation in Problem~\ref{problem.canonical}, in which case approaches such as those described in \cite{zheng2016} should be considered.

  \begin{proposition}
  	\label{proposition.mnorm_rep}
  	Let $C\subset\real_+^N$ be a compact convex set with nonempty interior. Assume that $C$ is also downward comprehensible, and define $S := {\mathrm{conv}} (C\cup -C)$, where $-C:=\{\signal{x} \in\real^N | -\signal{x}\in C\}$. Then the gauge function or Minkowski functional of $S$ \cite[Ch.~2]{ver2011}\cite[Corollary~2.5.6]{aus03}, denoted by
  	\begin{align}
  	\label{eq.mink}
  	 \|\signal{x}\|_S:=\inf\{\gamma > 0~|~(1/\gamma) \signal{x}\in S \} 
  	 \end{align}
  	for all $\signal{x}\in\real^N$, is a monotone norm. Furthermore, we have  	
  	\begin{align}
  	\label{eq.c_monotone}
  	C:=\{\signal{x} \in \real_+^N~|~\|\signal{x}\|_S \le 1 \}.
  	\end{align}
  	
  	Conversely, given an arbitrary monotone norm $\|\cdot\|$, the set $C:=\{\signal{x} \in\real^N_+~|~ \|\signal{x}\|\le 1\}$ is a compact convex set that is downward comprehensible on $\real_+^N$. Furthermore, $C$ has nonempty interior.
  	
\end{proposition}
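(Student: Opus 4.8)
The plan is to route everything through the single structural identity $S \cap \real_+^N = C$, from which both claimed properties of $\|\cdot\|_S$ follow almost immediately, and then to dispatch the converse direction by direct verification.

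First I would record that $S = \mathrm{conv}(C \cup -C)$ is a symmetric convex body. Symmetry is inherited from the symmetry of $C \cup -C$ under the convex-hull operation; compactness follows because $C$ (hence $-C$, hence $C \cup -C$) is compact and the convex hull of a compact subset of $\real^N$ is compact; and $S$ has nonempty interior since $C \subset S$ does. Because $S$ is symmetric, convex, and has nonempty interior, $\signal{0} \in \mathrm{int}\,S$ (take $\signal{p} \in \mathrm{int}\,S$, so $-\signal{p} \in \mathrm{int}\,S$, and $\signal{0}$ is their midpoint). With $S$ a symmetric convex body containing $\signal{0}$ in its interior, the standard results already cited in the statement \cite[Ch.~2]{ver2011}\cite[Corollary~2.5.6]{aus03} give that its gauge $\|\cdot\|_S$ in \refeq{eq.mink} is a norm; only monotonicity and the identity \refeq{eq.c_monotone} remain, and these are where downward comprehensibility enters.

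The heart of the argument is the lemma $S \cap \real_+^N = C$. The inclusion $C \subseteq S \cap \real_+^N$ is immediate. For the reverse, I would first note $\signal{0} \in C$ (pick any $\signal{c} \in C \subseteq \real_+^N$; then $\signal{0} \le \signal{c}$ forces $\signal{0} \in C$ by downward comprehensibility). Given $\signal{z} \in S \cap \real_+^N$, Carathéodory's theorem lets me write $\signal{z}$ as a convex combination of points of $C \cup -C$; grouping the $C$-terms and the $(-C)$-terms and using convexity of $C$ (with $\signal{0}\in C$ covering degenerate weights) yields $\signal{z} = s\signal{a} - t\signal{b}$ with $\signal{a},\signal{b} \in C$ and $s,t \ge 0$, $s + t = 1$. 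Since $\signal{b} \ge \signal{0}$ we get $\signal{0} \le \signal{z} \le s\signal{a}$; moreover $s\signal{a} \le \signal{a}$ entrywise (as $s \le 1$ and $\signal{a} \ge \signal{0}$), so $s\signal{a} \in C$ by downward comprehensibility, and then $\signal{0} \le \signal{z} \le s\signal{a} \in C$ gives $\signal{z} \in C$, again by downward comprehensibility. This step is the main obstacle, in the sense that it is the only place the hypotheses on $C$ are used nontrivially.

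With the lemma in hand, the remaining claims of the forward direction are short. For \refeq{eq.c_monotone}, closedness of $S$ gives $\{\signal{x} \mid \|\signal{x}\|_S \le 1\} = S$, so intersecting with $\real_+^N$ and applying the lemma yields $\{\signal{x}\in\real_+^N \mid \|\signal{x}\|_S \le 1\} = S \cap \real_+^N = C$. For monotonicity, given $\signal{0} \le \signal{y} \le \signal{x}$ I would show every admissible scaling for $\signal{x}$ is admissible for $\signal{y}$: if $\gamma > 0$ and $(1/\gamma)\signal{x} \in S$, then $(1/\gamma)\signal{x} \in S \cap \real_+^N = C$, so downward comprehensibility applied to $\signal{0} \le (1/\gamma)\signal{y} \le (1/\gamma)\signal{x}$ gives $(1/\gamma)\signal{y} \in C \subseteq S$; hence the infimum defining $\|\signal{y}\|_S$ is at most that defining $\|\signal{x}\|_S$, i.e., $\|\signal{y}\|_S \le \|\signal{x}\|_S$. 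Finally, the converse direction is a routine verification for $C := \{\signal{x}\in\real_+^N \mid \|\signal{x}\| \le 1\}$: it is convex and closed as the intersection of the convex, closed unit ball with $\real_+^N$, bounded because it sits inside that ball, hence compact; and it is downward comprehensible directly from monotonicity of $\|\cdot\|$. The only point needing care is nonempty interior, which I would establish by exhibiting an interior point $\signal{x}_0 = \alpha\signal{1}$, $\signal{1} := [1,\ldots,1]^t$, with $\alpha > 0$ small enough that $\|\signal{x}_0\| < 1$, and then using Fact~\ref{fact.norm_equiv} to bound $\|\cdot\|$ by a multiple of $\|\cdot\|_2$, so that a sufficiently small closed Euclidean ball around $\signal{x}_0$ stays both inside $\real_+^N$ (each coordinate of $\signal{x}_0$ equals $\alpha>0$) and inside the unit ball of $\|\cdot\|$.
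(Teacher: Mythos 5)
Your proposal is correct and follows essentially the same route as the paper: both arguments hinge on the identity $S\cap\real_+^N=C$, proved by decomposing a point of $S\cap\real_+^N$ as a convex combination of points of $C$ and $-C$ and invoking downward comprehensibility twice, and both then read off \refeq{eq.c_monotone} and monotonicity from that identity before verifying the converse directly. Your grouping of the combination into a single pair $s\signal{a}-t\signal{b}$ (with $\signal{0}\in C$ handling the degenerate weights) is only a cosmetic repackaging of the paper's normalization by $u=\sum_i\alpha_i^\dprime$.
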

  
  \begin{proof}
  	First note that the set $S$ is compact with nonempty interior because it is the convex hull of a compact set with nonempty interior in a finite dimensional space. It is also symmetric by construction, so $S$ is a symmetric convex body. Therefore, by \cite[Proposition~2.1]{ver2011}, the function in \refeq{eq.mink} defines a norm on $\real^N$, and $S$ can be equivalently expressed as the unit ball $\{\signal{x}\in\real^N~|~\|\signal{x}\|_S\le 1 \}=S$ (see also \cite[Corollary~2.5.6]{aus03}). We now proceed to show that this norm is monotone and that $C=\{\signal{x} \in \real_+^N~|~\|\signal{x}\|_S \le 1 \}=\{\signal{x} \in \real_+^N~|~\signal{x}\in S \}$; i.e., given $C$, the operation ${\mathrm{conv}} (C\cup -C)$ does not remove or add any vectors to the nonnegative orthant. 
  	
  	It is clear that $\signal{x}\in C$ implies $\signal{x}\in S$. We now prove that $\signal{x}\in S\cap {\real_+^N}$ implies $\signal{x}\in C$. As a consequence of \cite[Proposition~3.4]{baus11}, any vector $\signal{x}\in S\cap \signal{\real_+^N}$ can be written as 
  	\begin{align*}
  	  	\signal{x} = \sum_{i=1}^{I_1} \alpha^\prime_i \psignal{x}_i + \sum_{i=1}^{I_2} \alpha^\dprime_i \signal{x}^\dprime_i,
  	\end{align*}
  	where $I_1\in\Natural$, $I_2\in\Natural$, $\{\signal{x}^\prime_i\}_{i\in\{1,\ldots,I_1\}} \subset -C$, $\{\signal{x}^\dprime_i\}_{i\in\{1,\ldots,I_2\}} \subset C$, $\{\alpha^\prime_i\}_{i\in\{1,\ldots, I_1\}}\subset\real_+$,  $\{\alpha^\dprime_i\}_{i\in\{1,\ldots, I_2\}}\subset\real_{++}$, and $\sum_{i=1}^{I_1} \alpha^\prime_i + \sum_{i=1}^{I_2} \alpha^\dprime_i = 1$. Since $-\sum_{i=1}^{I_1} \alpha^\prime_i \psignal{x}_i$ is a non-negative vector and  $0<u:=\sum_{i=1}^{I_2} \alpha^\dprime_i\le 1$, we deduce
  	\begin{align*}
  	\signal{0}\le\signal{x} \le  \sum_{i=1}^{I_2} \alpha^\dprime_i \signal{x}^\dprime_i \le \sum_{i=1}^{I_2} \dfrac{\alpha^\dprime_i}{u} \signal{x}^\dprime_i=:\tilde{\signal{x}}.
  	\end{align*}  	
  	Therefore, $\tilde{\signal{x}}$ is a convex combination of points in $C$ and $\signal{x}\le \tilde{\signal{x}}$. Since $C$ is a convex and downward comprehensible set, we have both $\signal{x}\in C$ and $\tilde{\signal{x}}\in C$. Combining everything, we verify that $\signal{x}\in C$ if and only if $\signal{x}\in S\cap\real_+^N$. Since we have already proved that $S$ is the unity ball  with the norm $\|\cdot\|_S$, the result $\signal{x}\in C \Leftrightarrow \signal{x}\in S\cap\real_+^N$ also implies \refeq{eq.c_monotone}. Moreover, by \refeq{eq.mink}, we can write 
  	  	\begin{align}
  	  	\label{eq.eqv_norm}
  	  	(\forall \signal{x}\in\real_+) \|\signal{x}\|_S=\inf\{\gamma > 0~|~(1/\gamma) \signal{x}\in C \}.
  	  	\end{align}
  	The monotonicity of the norm in \refeq{eq.mink} follows from \refeq{eq.eqv_norm} and the fact that, if $(\signal{x},\signal{y})\in\real_+^N\times \real_+^N$ satisfies  $\signal{0}\le({1}/{\gamma}) \signal{x}\le ({1}/{\gamma}) \signal{y} \in C$ for some $\gamma\in\real_{++}$, then we have $({1}/{\gamma}) \signal{x}\in C$ by downward comprehensibility of $C$. 
  	
  	We now proceed to prove the converse. Let $\|\cdot\|$ be a monotone norm. By using standard arguments in convex analysis, we know that $C:=\{\signal{x} \in\real^N_+~|~ \|\signal{x}\|\le 1\}$ is a compact convex set with nonempty interior. We omit the details for brevity. Furthermore, by monotonicity of the norm, we have 
  	$(\forall\signal{x}\in\real^N_+)(\forall\signal{y}\in C) \signal{x}\le\signal{y} \Rightarrow \|\signal{x}\|\le \|\signal{y}\|\le 1$, which implies $\signal{x}\in C$ and concludes the proof that $C$ is downward comprehensible on $\real^N_+$.

  \end{proof}

   A practical implication of  Proposition~\ref{proposition.mnorm_rep} is that, if we are given power constraints of the form $f_1(\signal{p})\le 0,\ldots, f_K(\signal{p})\le 0$ for possibly nonlinear and nondifferentiable convex functions $f_1:\real_+^N\to\real$, ..., $f_K:\real_+^N\to\real$, and if $C= \cap_{i=1}^K\{\signal{p}\in\real_+^N~|~ f_i(\signal{p})\le 0\}$ satisfies the assumptions of the proposition, then we can equivalently represent these constraints by a monotone norm that can be computed as in \refeq{eq.mink}. If the norm in \refeq{eq.mink} is not easy to obtain in closed form, but we can easily verify whether a given point $\signal{x}\in\real_+^N$ satisfies $\signal{x}\in C$, then the norm can be evaluated numerically by using the simple techniques described in \cite[Algorithm~2,Algorithm~3]{zheng2016} (e.g., the bisection algorithm). Therefore, we can construct the sequence described in \refeq{eq.krause_iter}, and, as we show below, the simple fixed point algorithm in \refeq{eq.krause_iter} with the monotone norm $\|\signal{x}\|:=(1/\bar{p}) \|\signal{x}\|_a $, where $\bar{p}>0$, solves Problem~\ref{problem.canonical}:~\footnote{As already mentioned in the Introduction, the focus of this study is on obtaining a deep understanding of properties of the solution to Problem~\ref{problem.canonical}, and not on the development of algorithmic solutions, which has been the focus of previous work.}
   
\begin{fact} (\cite{nuz07} Properties of the solution to Problem~\ref{problem.canonical})
	
	\begin{enumerate}
		\label{fact.equivalence}
		\item Problem~\ref{problem.canonical} has a unique solution $(\signal{p}^\star, c^\star)\in\real^N_{++}\times\real_{++}$ that is also the solution to the conditional eigenvalue problem:
		\begin{problem}
			Find $(\signal{p}^\star, c^\star)\in\real^N_{++}\times\real_{++}$ such that $T(\signal{p}^\star)=\dfrac{1}{c^\star} \signal{p}^\star$ and $\|\signal{p}^\star\|:=(1/\bar{p}) \|\signal{p}^\star\|_a = 1$.
		\end{problem}

	\item Denote by  $(\signal{p}^\star_{\bar{p}},~c^\star_{\bar{p}})\in\real_{++}^N\times\real_{++}$ the solution to Problem~\ref{problem.canonical} for a given power budget $\bar{p}\in\real_{++}$. Then the function $U:\real_{++}\to\real_{++}:\bar{p}\mapsto c^\star_{\bar{p}}$ is  increasing and $P:\real_{++}\to\real_{++}^N:\bar{p}\mapsto \signal{p}^\star_{\bar{p}}$ is increasing coordinate-wise; i.e.,
	\begin{multline*}
	 (\forall \bar{p}^\prime\in\real_{++}) (\forall \bar{p}^\dprime\in\real_{++}) \bar{p}^\prime > \bar{p}^\dprime \\ \Rightarrow U(\bar{p}^\prime) > U(\bar{p}^\dprime)\text{ and } P(\bar{p}^\prime) > P(\bar{p}^\dprime)
	\end{multline*}
		\end{enumerate}
\end{fact}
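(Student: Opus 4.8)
The plan is to reduce Problem~\ref{problem.canonical} to the conditional eigenvalue problem of Fact~\ref{fact.krause} applied to the rescaled norm $\|\cdot\|:=(1/\bar{p})\|\cdot\|_a$. First I would check that $\|\cdot\|$ is again a monotone norm (Definition~\ref{definition.monotone_norm}), since multiplying a monotone norm by the positive constant $1/\bar{p}$ preserves the norm axioms and the implication $\signal{x}\ge\signal{y}\Rightarrow\|\signal{x}\|\ge\|\signal{y}\|$. Fact~\ref{fact.krause} then supplies a unique $(\signal{p}^\star,\lambda^\star)\in\real_{++}^N\times\real_{++}$ with $T(\signal{p}^\star)=\lambda^\star\signal{p}^\star$ and $\|\signal{p}^\star\|=1$. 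Putting $c^\star:=1/\lambda^\star$ gives $\signal{p}^\star=c^\star T(\signal{p}^\star)\in\mathrm{Fix}(c^\star T)$ and $\|\signal{p}^\star\|_a=\bar{p}$, so $(\signal{p}^\star,c^\star)$ is feasible; all the remaining work in Part~1 lies in optimality and uniqueness, which I would isolate in one lemma: if $\signal{p}\in\real_{++}^N$ satisfies $T(\signal{p})=\lambda\signal{p}$ and $\|\signal{p}\|\le 1$, then $\lambda\ge\lambda^\star$, with equality only for $\signal{p}=\signal{p}^\star$. Because every feasible tuple obeys $T(\signal{p})=(1/c)\signal{p}$ with $\|\signal{p}\|=(1/\bar{p})\|\signal{p}\|_a\le 1$ (and $\signal{p}\in\real_{++}^N$ automatically, as $T$ is positive), this lemma yields $c=1/\lambda\le c^\star$ with equality only at $\signal{p}^\star$, i.e.\ both optimality and uniqueness.

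To prove the lemma, I would normalize $\signal{q}:=\signal{p}/\|\signal{p}\|$ and set $\mu:=1/\|\signal{p}\|\ge 1$. If $\mu=1$, then $(\signal{q},\lambda)=(\signal{p},\lambda)$ is a norm-one conditional eigenpair, so uniqueness in Fact~\ref{fact.krause} forces $\lambda=\lambda^\star$ and $\signal{p}=\signal{p}^\star$. If $\mu>1$, scalability (Fact~\ref{fact.concave_mappings}) gives $T(\signal{q})=T(\mu\signal{p})<\mu T(\signal{p})=\lambda\signal{q}$; assuming $\lambda\le\lambda^\star$ would then give $(1/\lambda^\star)T(\signal{q})<\signal{q}$, which exhibits $\signal{q}$ as a strict supersolution of $\hat{T}:=(1/\lambda^\star)T$, a positive concave mapping whose unique fixed point is $\signal{p}^\star$. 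Here I would invoke a short scalability comparison, stated once and reused: if $\hat{T}(\signal{q})\le\signal{q}$ then $\signal{p}^\star\le\signal{q}$, because putting $s:=\max_i p^\star_i/q_i$ and assuming $s>1$ makes the maximizing coordinate $k$ satisfy $p^\star_k=(\hat{T}(\signal{p}^\star))_k\le(\hat{T}(s\signal{q}))_k<s(\hat{T}(\signal{q}))_k\le s\,q_k=p^\star_k$, a contradiction; strictness of the supersolution then upgrades this to $\signal{p}^\star<\signal{q}$. Consequently some $\nu>1$ obeys $\nu\signal{p}^\star\le\signal{q}$, and monotonicity together with positive homogeneity of $\|\cdot\|$ give $\nu=\nu\|\signal{p}^\star\|\le\|\signal{q}\|=1$, a contradiction. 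Thus $\lambda>\lambda^\star$ whenever $\mu>1$, completing the lemma and hence Part~1.

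For Part~2, take $\bar{p}'>\bar{p}''$ with optimal tuples $(\signal{p}',c')$ and $(\signal{p}'',c'')$. Since $\|\signal{p}''\|_a=\bar{p}''\le\bar{p}'$, the tuple $(\signal{p}'',c'')$ is feasible for budget $\bar{p}'$, so optimality gives $c'\ge c''$; equality is excluded because $c'=c''$ would make $\signal{p}'$ and $\signal{p}''$ two positive fixed points of the single mapping $c'T$, which has at most one fixed point (the comparison lemma applied in both directions, or the standard interference function property of positive concave mappings), forcing $\signal{p}'=\signal{p}''$ and hence the contradiction $\bar{p}'=\|\signal{p}'\|_a=\|\signal{p}''\|_a=\bar{p}''$. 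Therefore $U(\bar{p}')>U(\bar{p}'')$. Writing $\lambda'=1/c'<1/c''=\lambda''$, the identity $(1/\lambda')T(\signal{p}'')=(\lambda''/\lambda')\signal{p}''>\signal{p}''$ exhibits $\signal{p}''$ as a strict subsolution of $(1/\lambda')T$, whose fixed point is $\signal{p}'$; the reversed form of the comparison lemma then yields $\signal{p}''<\signal{p}'$ coordinate-wise, i.e.\ $P(\bar{p}')>P(\bar{p}'')$.

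The main obstacle is the optimality lemma, specifically ruling out feasible points that lie strictly inside the norm ball: a feasible $\signal{p}$ need not satisfy $\|\signal{p}\|=1$, so the uniqueness in Fact~\ref{fact.krause} cannot be applied to it directly, and I must first use scalability to transport $\signal{p}$ to the boundary, which converts the eigen-equation into a strict supersolution inequality. Closing the argument then hinges on the Collatz--Wielandt-type comparison with $\signal{p}^\star$ rather than on any strict monotonicity of $\|\cdot\|$, which a general monotone norm in the sense of Definition~\ref{definition.monotone_norm} need not enjoy; this is why the final contradiction is drawn from the homogeneity bound $\nu\|\signal{p}^\star\|\le\|\signal{q}\|$ and not merely from $\signal{p}^\star<\signal{q}$.
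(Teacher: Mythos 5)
Your proof is correct. Note, however, that the paper does not prove this statement at all: it is stated as a Fact and attributed entirely to the cited reference \cite{nuz07}, so there is no in-paper argument to compare against step by step. What you have produced is a self-contained derivation using only the ingredients the paper does supply (Fact~\ref{fact.concave_mappings}, Fact~\ref{fact.krause}, and Definition~\ref{definition.monotone_norm}). The structure is sound: reducing the power constraint to the rescaled monotone norm $(1/\bar{p})\|\cdot\|_a$ and invoking Fact~\ref{fact.krause} handles existence; the real content is your Collatz--Wielandt-type comparison lemma (if $\hat{T}(\signal{q})\le\signal{q}$ then the fixed point of $\hat{T}$ satisfies $\signal{p}^\star\le\signal{q}$, proved by contradicting $s:=\max_i p^\star_i/q_i>1$ via monotonicity and scalability), which correctly converts the eigen-equation of an interior feasible point into a strict supersolution inequality and closes the argument. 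Two details deserve credit for being handled carefully rather than glossed over: (a) you do not assume strict monotonicity of the norm, which Definition~\ref{definition.monotone_norm} does not provide, and instead extract the contradiction from positive homogeneity via $\nu\|\signal{p}^\star\|\le\|\signal{q}\|$ with $\nu>1$; and (b) in Part~2 you rule out $c^\prime=c^\dprime$ by uniqueness of fixed points of $c^\prime T$ (the comparison lemma applied in both directions) rather than by an unjustified strictness claim, and you obtain the coordinate-wise strict ordering $P(\bar{p}^\prime)>P(\bar{p}^\dprime)$ from the subsolution form of the same lemma. I see no gap; the only caveat is that your comparison lemma is stated for the specific mapping $(1/\lambda^\star)T$ but reused for $c^\prime T$ and $(1/\lambda^\prime)T$ in Part~2, so in a polished write-up it should be stated once for an arbitrary positive concave mapping admitting a fixed point.
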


For reference, we call the functions $U:\real_{++}\to\real_{++}$ and $P:\real_{++}\to\real_{++}^N$ in Fact~\ref{fact.equivalence}.2 the \emph{utility} and \emph{power} functions, respectively. Uniqueness of the solution to Problem~\ref{problem.canonical} also enables us to define a notion of energy efficiency (utility over power) as follows:

\begin{definition}
	\label{def.mm_ee}
	($\|\cdot\|_b$-energy efficiency function) Let $U:\real_{++}\to\real_{++}$ and $P:\real_{++}\to\real_{++}^N$ be, respectively, the utility and power functions. By assuming that $\|\cdot\|_b$ is a monotone norm, the $\|\cdot\|_b$-energy efficiency function is given by $E:\real_{++}\to\real_{++}:\bar{p}\mapsto U(\bar{p})/\|P(\bar{p})\|_b$, or, equivalently, $E:\real_{++}\to\real_{++}:\bar{p}\mapsto 1/\|T(P(\bar{p})) \|_b$, which is an immediate consequence of Fact~\ref{fact.equivalence}.1.
\end{definition}

\begin{example} If the monotone norm $\|\cdot\|_b$ is selected as the $l_1$ norm for a utility maximization problem in which $\signal{p}$ is a vector of transmit power of base stations (in Watts) and the utility $c$ is the common rate achieved by users (in bits/second), then $E(\bar{p})$ shows how many bits each user receives for each Joule spent by a network optimized for the power budget $\bar{p}$. See \cite{renato2016maxmin} for an example of a network utility maximization problem with variables having this interpretation.
\end{example}

With the above definitions, we now proceed to study properties of the functions $U$, $P$, and $E$. The results in the following sections establish the main contribution of this study. 

\subsection{Monotonicity and continuity of the power, utility, and energy efficiency functions}

Fact~\ref{fact.equivalence} shows that the utility and power functions are (coordinate-wise) increasing. However, as we show below, the transmit power always grows faster than the utility. 

\begin{lemma}
	\label{lemma.ee_prop}
	The $\|\cdot\|_b$-energy efficiency function $E:\real_{++}\to\real_{++}$ in Definition~\ref{def.mm_ee} is non-increasing; i.e.,
	\begin{align*}
		(\forall \bar{p}^\prime\in\real_{++})(\forall \bar{p}^\dprime\in\real_{++})~ \bar{p}^\prime > \bar{p}^\dprime \Rightarrow E(\bar{p}^\prime) \le E(\bar{p}^\dprime).
	\end{align*}
\end{lemma}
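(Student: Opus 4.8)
The plan is to work with the equivalent representation $E(\bar{p}) = 1/\|T(P(\bar{p}))\|_b$ supplied by Definition~\ref{def.mm_ee}, rather than with the ratio $U(\bar{p})/\|P(\bar{p})\|_b$ directly. The latter form is awkward because both the numerator $U(\bar{p})$ and the denominator $\|P(\bar{p})\|_b$ are increasing in $\bar{p}$ (by Fact~\ref{fact.equivalence}.2), so monotonicity of the quotient is not manifest and would require balancing the two growth rates against each other. In contrast, the reciprocal form reduces the claim to showing that the scalar map $\bar{p}\mapsto\|T(P(\bar{p}))\|_b$ is \emph{non-decreasing}, after which the conclusion follows because $t\mapsto 1/t$ is decreasing on $\real_{++}$. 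Here I would note that $\|T(P(\bar{p}))\|_b>0$, since $T$ maps into $\real_{++}^N$ and hence $T(P(\bar{p}))\in\real_{++}^N$, so the reciprocal is well defined and finite.

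To establish that $\bar{p}\mapsto\|T(P(\bar{p}))\|_b$ is non-decreasing, I would fix arbitrary $\bar{p}^\prime>\bar{p}^\dprime$ in $\real_{++}$ and chain three monotonicity properties already available in the preliminaries. First, Fact~\ref{fact.equivalence}.2 yields $P(\bar{p}^\prime) > P(\bar{p}^\dprime)$ coordinate-wise, hence in particular $P(\bar{p}^\prime)\ge P(\bar{p}^\dprime)$. Second, because $T$ is a positive concave mapping, Fact~\ref{fact.concave_mappings}.2 (monotonicity) gives $T(P(\bar{p}^\prime))\ge T(P(\bar{p}^\dprime))$, with both vectors lying in $\real_{++}^N\subset\real_+^N$. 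Third, since $\|\cdot\|_b$ is a monotone norm in the sense of Definition~\ref{definition.monotone_norm}, this entry-wise inequality transfers to the norms, giving $\|T(P(\bar{p}^\prime))\|_b\ge\|T(P(\bar{p}^\dprime))\|_b$.

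Combining these steps, for any $\bar{p}^\prime>\bar{p}^\dprime$ I obtain $E(\bar{p}^\prime)=1/\|T(P(\bar{p}^\prime))\|_b\le 1/\|T(P(\bar{p}^\dprime))\|_b=E(\bar{p}^\dprime)$, which is exactly the asserted non-increasing property. I do not expect a genuine obstacle in this argument: it is a composition of monotonicity statements that are each already proved earlier. The only real decision, and the step that makes everything routine, is to adopt the reciprocal formulation of $E$ from the outset, so that the simultaneous increase of $U$ and of $\|P\|_b$ never has to be disentangled.
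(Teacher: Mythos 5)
Your proof is correct and follows essentially the same route as the paper's: both use the reciprocal form $E(\bar{p})=1/\|T(P(\bar{p}))\|_b$ and chain Fact~\ref{fact.equivalence}.2 (coordinate-wise monotonicity of $P$), Fact~\ref{fact.concave_mappings}.2 (monotonicity of $T$), and the monotonicity of the norm $\|\cdot\|_b$. No gaps.
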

\begin{proof}
Denote by $(\signal{p}^\prime, c^\prime)\in\real^N_{++}\times\real_{++}$ and $(\signal{p}^\dprime, c^\dprime)\in\real^N_{++}\times\real_{++}$ the optimal tuples to Problem~\ref{problem.canonical} with the power budget $\bar{p}$ set to, respectively, $\bar{p}^\prime\in\real_{++}$ and $\bar{p}^\dprime\in\real_{++}$. From $\bar{p}^\dprime > \bar{p}^\prime$ and Fact~\ref{fact.equivalence}.2, we obtain $\signal{p}^\dprime > \signal{p}^\prime$, which implies $T(\signal{p}^\dprime) \ge T(\signal{p}^\prime)\ge T(\signal{0}) > \signal{0}$ by Fact~\ref{fact.concave_mappings}.2 and positivity of  $T$. Monotonicity of the norm $\|\cdot\|_b$ now shows that 
\begin{multline*}
1/E(\bar{p}^\dprime) = \|T(\signal{p}^\dprime)\|_b \ge \|T(\signal{p}^\prime)\|_b \\ = 1/E(\bar{p}^\prime) \ge \|T(\signal{0})\|_b > 0,
\end{multline*}
which implies the desired result.
\end{proof}

The following proposition establishes the continuity of the power, utility, and energy efficiency functions.

 \begin{proposition}
 	\label{proposition.continuous}
 	Let $(\bar{p}_n)_{n\in\Natural}\subset\real_{++}$ be an arbitrary sequence of power budgets converging to an arbitrary scalar $\bar{p}^\star\in\real_{++}$. Then the power function $P:\real_{++}\to\real_{++}^N$, the utility function $U:\real_{++}\to\real_{++}$, and the $\|\cdot\|_b$-energy efficiency function $E:\real_{++}\to\real_{++}$ satisfy the following:
 	\begin{enumerate}
 		\item[(i)] $\lim_{n\to\infty}P(\bar{p}_n)=P(\bar{p}^\star)$;
 		\item[(ii)] $\lim_{n\to\infty}U(\bar{p}_n)=U(\bar{p}^\star)$; and
 		\item[(iii)] $\lim_{n\to\infty}E(\bar{p}_n)=E(\bar{p}^\star)$.
 	\end{enumerate}
 \end{proposition}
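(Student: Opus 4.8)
The plan is to prove (i) first, and then obtain (ii) and (iii) as quick consequences by combining (i) with the continuity of $T$ and of the norms, together with the closed-form expressions $U(\bar{p})=\bar{p}/\|T(P(\bar{p}))\|_a$ and $E(\bar{p})=1/\|T(P(\bar{p}))\|_b$. The former follows from Fact~\ref{fact.equivalence}.1 by applying $\|\cdot\|_a$ to $P(\bar{p})=U(\bar{p})\,T(P(\bar{p}))$ and using $\|P(\bar{p})\|_a=\bar{p}$; the latter is Definition~\ref{def.mm_ee}. For these two implications the only thing that must be checked is that the denominators stay bounded away from zero, which holds because $T$ maps into $\real_{++}^N$, so that $\|T(P(\bar{p}^\star))\|_a>0$ and $\|T(P(\bar{p}^\star))\|_b>0$.

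To establish (i) I would use a subsequence argument. First I would show that the sequence $(P(\bar{p}_n))_{n\in\Natural}$ is bounded: since $\bar{p}_n\to\bar{p}^\star$, the quantities $\|P(\bar{p}_n)\|_a=\bar{p}_n$ form a convergent, hence bounded, sequence, and by equivalence of norms (Fact~\ref{fact.norm_equiv}) the vectors $P(\bar{p}_n)$ lie in a bounded subset of $\real^N$. By the Bolzano--Weierstrass theorem it then suffices to show that every convergent subsequence of $(P(\bar{p}_n))_{n\in\Natural}$ has the same limit $P(\bar{p}^\star)$, after which the convergence of the whole sequence follows from the standard fact that a bounded sequence all of whose convergent subsequences share a common limit must converge to that limit.

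The heart of the argument is identifying the limit of an arbitrary convergent subsequence. Suppose $P(\bar{p}_{n_k})\to\signal{q}\in\real_+^N$. Writing the fixed point relation from Fact~\ref{fact.equivalence}.1 as $P(\bar{p}_{n_k})=U(\bar{p}_{n_k})\,T(P(\bar{p}_{n_k}))$ and applying $\|\cdot\|_a$ gives $U(\bar{p}_{n_k})=\bar{p}_{n_k}/\|T(P(\bar{p}_{n_k}))\|_a$. Since $T$ is continuous, $T(P(\bar{p}_{n_k}))\to T(\signal{q})$, and because $T$ maps into $\real_{++}^N$ we have $\|T(\signal{q})\|_a>0$, so the quotient is well defined in the limit and $U(\bar{p}_{n_k})\to c':=\bar{p}^\star/\|T(\signal{q})\|_a>0$. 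Passing to the limit in $P(\bar{p}_{n_k})=U(\bar{p}_{n_k})\,T(P(\bar{p}_{n_k}))$ yields $\signal{q}=c'\,T(\signal{q})$, while $\|P(\bar{p}_{n_k})\|_a=\bar{p}_{n_k}$ gives $\|\signal{q}\|_a=\bar{p}^\star$. In particular $\signal{q}=c'\,T(\signal{q})\in\real_{++}^N$, so $(\signal{q},c')$ solves the conditional eigenvalue problem in Fact~\ref{fact.equivalence}.1 for the power budget $\bar{p}^\star$. By uniqueness of that solution, $(\signal{q},c')=(P(\bar{p}^\star),U(\bar{p}^\star))$, as required.

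I expect the main obstacle to be the bookkeeping around positivity: one must argue that the limiting vector $\signal{q}$ cannot lie on the boundary of $\real_+^N$ and that the limit $c'$ of the utilities is a well-defined positive number, both of which hinge on $T$ taking values in the strictly positive orthant $\real_{++}^N$. This is also exactly what rules out division by zero in the closed-form expressions used for (ii) and (iii). The remaining ingredients are routine: continuity of $T$ and of the norm to pass limits through the fixed point equation, equivalence of norms for boundedness, the standard subsequence principle, and the uniqueness guaranteed by Fact~\ref{fact.equivalence}.1.
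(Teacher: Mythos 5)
Your proof is correct and follows essentially the same route as the paper: boundedness of $(P(\bar{p}_n))_{n\in\Natural}$ from $\|P(\bar{p}_n)\|_a=\bar{p}_n$, identification of every accumulation point with $P(\bar{p}^\star)$ via the fixed-point relation and the uniqueness in Fact~\ref{fact.equivalence}.1, and then (ii) and (iii) from (i) together with continuity of $T$ and positivity of $\|T(\cdot)\|$. The only (harmless) difference is that you obtain convergence of the utilities along the subsequence directly from the formula $U(\bar{p}_{n_k})=\bar{p}_{n_k}/\|T(P(\bar{p}_{n_k}))\|_a$, whereas the paper first bounds $c_n$ and extracts a further convergent subsequence by Bolzano--Weierstrass.
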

 \begin{proof}
 	To simplify the notation in the proof, we define $(\signal{p}_n,c_n):=(P(\bar{p}_n), U(\bar{p}_n))\in \real_{++}^N\times\real_{++}$ for every $n\in\Natural$ and $(\signal{p}^\star,c^\star):=(P(\bar{p}^\star), U(\bar{p}^\star))\in \real_{++}^N\times\real_{++}$.
 	
 	(i) The sequence $(\bar{p}_n)_{n\in\Natural}$ is bounded because it converges, so  $(\signal{p}_n)_{n\in\Natural}$ is also bounded because $\|\signal{p}_n\|_a = \bar{p}_n$ for every $n\in\Natural$ (see Fact~\ref{fact.equivalence}).  Therefore, by \cite[Lemma~2.38]{baus11} and \cite[Lemma~2.41(ii)]{baus11} (see also \cite[Sect. 2.1]{aus03}), $\lim_{n\to\infty}\signal{p}_n=\signal{p}^\star$ follows if we show that $\signal{p}^\star$ is the only accumulation point of the bounded sequence $(\signal{p}_n)_{n\in\Natural}$.

 	By the equivalence of norms in finite dimensional spaces, there exists a constant $b\in\real_{++}$ such that $\|\signal{p}_n\|_1\le b$ for every $n\in\Natural$. Let $\signal{p}^\prime\in\real_{+}^N$ be an arbitrary accumulation point of $(\signal{p}_n)_{n\in\Natural}$. As a consequence of \cite[Theorem~6.7.2]{micheal07}, we can extract from $(\signal{p}_n)_{n\in\Natural}$ a  subsequence $(\signal{p}_n)_{n\in K_1}$, $K_1\subset\Natural$, converging to  $\signal{p}^\prime\in\real_+^N$. By Fact~\ref{fact.equivalence}, the definition of the scalar $b$, and monotonicity of concave mappings (Fact~\ref{fact.concave_mappings}.2), we have
 	\begin{multline}
 	\label{eq.boundcn}
 	(\forall n\in \Natural) \\ \signal{0}< T(\signal{0})\le \dfrac{1}{c_n}\signal{p}_n = T(\signal{p}_n) \le T(b \signal{1})\text{ and }\|\signal{p}_n\|_a=\bar{p}_n.
 	\end{multline}
 	Therefore, \refeq{eq.boundcn} and monotonicity of the norms $\|\cdot\|_a$ and $\|\cdot\|_1$ yield
 	\begin{multline}
 	\label{eq.bounded_cn}
 	(\forall n\in \Natural)\quad 0<\dfrac{\|\signal{p}_n\|_a}{\|T(b\signal{1})\|_a} = \dfrac{\bar{p}_n}{\|T(b\signal{1})\|_a} \le c_n \\ \le \dfrac{\|\signal{p}_n\|_1}{\|T(\signal{0})\|_1}\le \dfrac{b}{\|T(\signal{0})\|_1}.
 	\end{multline}
 	The above inequalities show that the sequence $(c_n)_{n\in K_1}$ is bounded, so we can use the Bolzano-Weierstrass theorem to conclude that there is a subsequence $(c_n)_{n\in K_2}$, $K_2\subset K_1\subset\Natural$, converging to a scalar $c^\prime$. From \refeq{eq.bounded_cn}, we also know that $c^\prime = \lim_{n\in K_2} c_n \ge \lim_{n\in K_2} \bar{p}_n/\|T(b\signal{1})\|_a = \bar{p}^\star/\|T(b\signal{1})\|_a > 0$. As a result, from the continuity of the mapping $T$, $K_2\subset K_1$, and \refeq{eq.boundcn},
 	\begin{align}
 	\label{eq.fixed_point}
 	\dfrac{1}{c^\prime}\signal{p}^\prime=\lim_{n \in K_2} \dfrac{1}{c_n}\signal{p}_n = \lim_{n\in K_2} T(\signal{p}_n)=T(\signal{p}^\prime),
 	\end{align}
 	and, from continuity of norms and \refeq{eq.boundcn},
 	\begin{align}
 	\label{eq.norm}
 	\|\signal{p}^\prime\|_a=\lim_{n\in K_2} \|\signal{p}_n\|_a = \lim_{n\in K_2} \bar{p}_n = \bar{p}^\star.
 	\end{align}	
 	By \refeq{eq.fixed_point} and \refeq{eq.norm} the tuple $(\signal{p}^\prime, c^\prime)$ is a solution to the conditional eigenvalue problem associated with the mapping $T$. Hence, by Fact~\ref{fact.equivalence}.1, $(\signal{p}^\prime, c^\prime)$ is the unique solution $(\signal{p}^\star,~c^\star)$ to Problem~\ref{problem.canonical} with the power budget $\bar{p}^\star$. As a result, we have $\signal{p}^\star = \signal{p}^\prime$, which by uniqueness of $\signal{p}^\star$ shows that  $(\signal{p}_n)_{n\in\Natural}\subset\real_{++}^N$ converges to $\signal{p}^\star$. \par

 	(ii) By Fact~\ref{fact.concave_mappings}.2 and monotonicity of $\|\cdot\|_a$, the inequalities $\|T(\signal{p})\|_a\ge \|T(\signal{0})\|_a>0$ hold for every $\signal{p}\in \real_{+}^N$. Since  $c_n=\bar{p}_n/\|{T}(\signal{p}_n)\|_a \in\real_{++}$ and  $\bar{p}^\star/\|T(\signal{p}^\star)\|_a=c^\star$ by Fact~\ref{fact.equivalence}.1, we use (i) and continuity of $T$ to obtain   $\lim_{n\to\infty}c_n=\lim_{n\to\infty}\bar{p}_n/\|T(\signal{p}_n)\|_a=\bar{p}^\star/\|T(\signal{p}^\star)\|_a=c^\star$.

 	(iii) As shown above, we have $\|T(\signal{p})\|_b\ge\|T(\signal{0})\|_b>0$ for every $\signal{p}\in\real_+^N$, so  ${E(\bar{p}_n)}={1}/{\|T(\signal{p}_n)\|_b}\in\real_{++}$ for every $n\in\Natural$ . Therefore, (i) and continuity of the mapping $T$ yield
 	\begin{align*}
 	\lim_{n\to\infty}{E(\bar{p}_n)}=\lim_{n\to\infty}\dfrac{1}{\|T(\signal{p}_n)\|_b} = \dfrac{1}{\|T(\signal{p}^\star)\|_b}={E(\bar{p}^\star)},
 	\end{align*}
 	which completes the proof.
 \end{proof}

\subsection{Bounds on the utility}

Fact~\ref{fact.equivalence}.2 shows that the utility $U$ (e.g., rates) increases by increasing the power budget available to transmitters. However, in the next proposition we verify that the possible gain in utility is limited in general.\footnote{One of the inequalities in Proposition~\ref{proposition.bounded_rate} has already appeared in \cite{renato2016maxmin}. However, that study considered only the $l_\infty$ norm and a very particular max-min utility maximization problem in load coupled networks.} The next proposition also derives a conservative lower bound for the utility function. The bounds shown below are useful for a quick evaluation of the network performance for different values of the power budget $\bar{p}$. 

\begin{proposition}
	\label{proposition.bounded_rate}
Consider the assumptions and definitions in Problem~\ref{problem.canonical}, and denote by $\signal{M}\in\real_{+}^{N\times N}$ the lower bounding matrix of the interference mapping $T:\real_{+}^N\to\real_{++}^N$. Let $U:\real_{++}\to\real_{++}$ be the utility function in Definition~\ref{def.mm_ee}. Then each of the following holds:
\begin{itemize} 
	\item[(i)] (Upper bound) 
	\begin{align*}
	(\bar{p}\in\real_{++})~U(\bar{p}) \le \dfrac{\bar{p}}{\|T(\signal{0})\|_a}
	\end{align*}
	 Furthermore, if $\rho(\signal{M})>0$, then we also have
	\begin{align*}
	(\forall\bar{p}\in\real_{++})~U(\bar{p}) < \dfrac{1}{\rho(\signal{M})}.
	\end{align*}
	
	\item[(ii)] (Combined upper bound)  Assume that $\rho(\signal{M})>0$, then
\begin{multline*}
(\forall \bar{p}\in\real_{++})~U(\bar{p}) \le \begin{cases}
\dfrac{1}{\rho(\signal{M})} & \text{if } \bar{p}\ge u \\ 
\dfrac{\bar{p}}{\|T(\signal{0})\|_a} & \text{ otherwise,}
\end{cases}
\end{multline*}
where $u:=\|T(\signal{0})\|_a/\rho(\signal{M})$.

\item[(iii)] (Lower bound) Let $\beta\in\real_{++}$ be an arbitrary scalar satisfying $\|\signal{p}\|_\infty \le \beta \|\signal{p}\|_a$ for every $\signal{p}\in\real_{++}^N$ (see Fact~\ref{fact.norm_equiv}). Then
\begin{align}
\label{eq.bounded_rate2}
(\forall \bar{p}\in\real_{++})\quad  U(\bar{p}) \ge \dfrac{\bar{p}}{\|T(\bar{p}\beta\signal{1})\|_a}.
\end{align}

\item[(iv)] (Asymptotic lower bound)  Denote by $t^{(i)}:\real_{+}^N\to\real_{++}$ the concave function given by $i$th component of the interference mapping $T$; i.e., $T(\signal{p})=:[t^{(1)}(\signal{p}),\ldots,t^{(N)}(\signal{p})]^t$ for every $\signal{p}\in\real_{++}^N$. Let $\beta\in\real_{++}$ be as in part (iii) of the proposition, and define  $\signal{p}_\infty:=[t^{(1)}_\infty(\signal{1}),\ldots,t^{(N)}_\infty(\signal{1})]$, where $t^{(i)}_\infty$ is the asymptotic function of $t^{(i)}$ for $i\in\{1,\ldots,N\}$ (see Definition~\ref{definition.recession}). Then we have

\begin{align}
\label{eq.rate_unlimited}
c_\infty:=\lim_{\bar{p}\to\infty}U(\bar{p}) \ge \begin{cases} \dfrac{1}{\beta \|\signal{p}_\infty\|_a} & \quad \mathrm{if}\quad \|\signal{p}_\infty\|_a>0 \\
\infty & \mathrm{otherwise.}
\end{cases}
\end{align}
\end{itemize}
\end{proposition}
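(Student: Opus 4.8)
The plan is to reduce every claim to the defining relation of the optimal tuple. Writing $(\signal{p}^\star,c^\star):=(P(\bar{p}),U(\bar{p}))$, Fact~\ref{fact.equivalence}.1 gives $T(\signal{p}^\star)=(1/c^\star)\signal{p}^\star$ together with $\|\signal{p}^\star\|_a=\bar{p}$; taking the norm $\|\cdot\|_a$ on both sides yields the \emph{master identity} $U(\bar{p})=\bar{p}/\|T(\signal{p}^\star)\|_a$, from which every bound follows by controlling $\|T(\signal{p}^\star)\|_a$ from above or below. For the first inequality in (i), I use $\signal{p}^\star\ge\signal{0}$ together with monotonicity of $T$ (Fact~\ref{fact.concave_mappings}.2) and of the norm to get $\|T(\signal{p}^\star)\|_a\ge\|T(\signal{0})\|_a>0$, hence $U(\bar{p})\le \bar{p}/\|T(\signal{0})\|_a$. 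For the second inequality I observe that $\signal{p}^\star$ is exactly the conditional eigenvector of $T$ for the monotone norm $(1/\bar{p})\|\cdot\|_a$, with conditional eigenvalue $1/c^\star$; Fact~\ref{fact.eigbound} then gives $1/c^\star>\rho(\signal{M})$, i.e. $U(\bar{p})<1/\rho(\signal{M})$ when $\rho(\signal{M})>0$.

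Part (ii) is immediate: $U(\bar{p})$ lies below the pointwise minimum of the two bounds from (i), and the expressions $\bar{p}/\|T(\signal{0})\|_a$ and $1/\rho(\signal{M})$ cross precisely at $\bar{p}=u:=\|T(\signal{0})\|_a/\rho(\signal{M})$, which produces the stated case split. For (iii) I again bound $\|T(\signal{p}^\star)\|_a$, this time from above: the hypothesis on $\beta$ with $\|\signal{p}^\star\|_a=\bar{p}$ gives $\|\signal{p}^\star\|_\infty\le\beta\bar{p}$, i.e. the entrywise inequality $\signal{p}^\star\le\bar{p}\beta\signal{1}$; monotonicity of $T$ and of $\|\cdot\|_a$ then give $\|T(\signal{p}^\star)\|_a\le\|T(\bar{p}\beta\signal{1})\|_a$, and substituting into the master identity yields \refeq{eq.bounded_rate2}.

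Part (iv) is obtained by letting $\bar{p}\to\infty$ in (iii). Since $U$ is increasing (Fact~\ref{fact.equivalence}.2), the limit $c_\infty$ exists in $(0,\infty]$, so it suffices to evaluate the limit of the lower bound. The crux is the vector limit of $(1/\bar{p})T(\bar{p}\beta\signal{1})$: setting $h:=1/\bar{p}$, its $i$th coordinate is $h\,t^{(i)}(h^{-1}\beta\signal{1})$, which by the recession-function formula \refeq{eq.recession_func} tends to $t^{(i)}_\infty(\beta\signal{1})=\beta\,t^{(i)}_\infty(\signal{1})$ using positive homogeneity, so the whole vector converges to $\beta\signal{p}_\infty$. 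Continuity of the norm then gives $(1/\bar{p})\|T(\bar{p}\beta\signal{1})\|_a\to\beta\|\signal{p}_\infty\|_a$. If $\|\signal{p}_\infty\|_a>0$ this yields $c_\infty\ge 1/(\beta\|\signal{p}_\infty\|_a)$; if the limit is $0$, the lower bound in (iii) diverges, forcing $c_\infty=\infty$.

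I expect the only delicate step to be this last limit. To invoke Definition~\ref{definition.recession} and \refeq{eq.recession_func} I must first extend each component $t^{(i)}$ to all of $\real^N$ (by $-\infty$ off $\real_+^N$) and check that the extension is proper, upper semicontinuous, and concave, and that $\signal{0}\in\mathrm{dom}$ so that the formula applies for the direction $\beta\signal{1}$; the appeals to positive homogeneity and to continuity of the norm for the coordinatewise-to-norm passage also need to be stated carefully. The remaining parts are routine monotonicity bookkeeping on top of the master identity.
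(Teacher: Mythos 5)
Your proposal is correct and follows essentially the same route as the paper: the "master identity" $U(\bar{p})=\bar{p}/\|T(P(\bar{p}))\|_a$ from Fact~\ref{fact.equivalence}.1, monotonicity of $T$ and $\|\cdot\|_a$ for the bounds in (i) and (iii), Fact~\ref{fact.eigbound} for the spectral-radius bound, and the substitution $h=1/\bar{p}$ with \refeq{eq.recession_func} and positive homogeneity for (iv). The only difference is that you flag the technical hypotheses needed to invoke the recession-function formula (proper concave extension, $\signal{0}\in\mathrm{dom}$), a point the paper's proof passes over silently.
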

\begin{proof}
	In this proof, we use the standard notation  $P:\real_{++}\to\real_{++}^N$ for the power function in Definition~\ref{def.mm_ee}.

	(i)  Define by $\|\cdot\|_{\bar{p}}$ the monotone norm $\|\signal{p}\|_{\bar{p}} := (1/\bar{p}) \|\signal{p}\|_a$ for a given power budget $\bar{p}\in\real_{++}$. By Fact \ref{fact.equivalence}.1, we have 
	\begin{multline}
	\label{eq.loc_ineq1}
	(\forall \bar{p}\in\real_{++}) \\
	\dfrac{1}{U(\bar{p})} P(\bar{p}) = T(P(\bar{p})) \text{ and } \|P(\bar{p})\|_{\bar{p}} = \|P(\bar{p})\|_a/{\bar{p}} = 1
	\end{multline}
	
	Now, monotonicity of both the norm $\|\cdot\|_a$ and the mapping $T$ (Fact~\ref{fact.concave_mappings}.2) yields
	\begin{multline*}
	(\forall\bar{p}\in\real_{++})  \frac{\bar{p}}{U(\bar{p})} = \frac{1}{U(\bar{p})} \|P(\bar{p})\|_a \\  = \|T(P(\bar{p}))\|_a \ge  \|T(\signal{0})\|_a > 0.
	\end{multline*}
	Hence $U(\bar{p}) \le \bar{p}/\|T(\signal{0})\|_a$. Now assume that $\rho(\signal{M})>0$, in which case $U(\bar{p}) < {1}/{\rho(\signal{M})}$ immediately follows from \refeq{eq.loc_ineq1} and Fact~\ref{fact.eigbound}.
	
	(ii) Immediate from the two inequalities in (i).

	(iii) From the definition of the scalar $\beta$ and Fact~\ref{fact.equivalence}.1, we deduce $\|P(\bar{p})\|_\infty\le \beta \|P(\bar{p})\|_a=\beta \bar{p}$ for all $\bar{p}\in\real_{++}$. Therefore, $(\forall \bar{p}\in\real_{++})~~ P(\bar{p})\le \|P(\bar{p})\|_\infty~\signal{1}\le \beta\|P(\bar{p})\|_a\signal{1}=\bar{p}\beta\signal{1}$. By using this relation together with Fact~\ref{fact.equivalence}, the monotonicity of $\|\cdot\|_a$, and monotonicity of $T$ (Fact~\ref{fact.concave_mappings}.2) we obtain
		\begin{multline*}
	(\forall \bar{p}\in\real_{++})\quad	0<\frac{\bar{p}}{U(\bar{p})} = \frac{\|P(\bar{p})\|_a}{U(\bar{p})}  \\ = \|T(P(\bar{p}))\|_a \le \|T(\bar{p}\beta\signal{1})\|_a, 
		\end{multline*}
	which implies \refeq{eq.bounded_rate2} and completes the proof of (ii).
	
	(iv) From (iii) we know that $0<1/U(\bar{p})\le (1/\bar{p})\|T(\bar{p}\beta\signal{1})\|_a$ for every $\bar{p}\in\real_{++}$. By taking the limit as $\bar{p}\to\infty$ and by considering \refeq{eq.recession_func} we deduce
	\begin{align}
	\label{eq.asymp}
	0\le \lim_{\bar{p}\to\infty}\dfrac{1}{U(\bar{p})}\le \|[t^{(1)}_\infty(\beta\signal{1}),\ldots,t^{(N)}_\infty(\beta\signal{1})]\|_a=\beta\|\signal{p}_\infty\|_a,
	\end{align}
	where the last equality comes from the fact that asymptotic functions are positively homogeneous (see Definition~\ref{definition.recession}). The result in \refeq{eq.asymp} implies the inequality in \refeq{eq.rate_unlimited}.   
\end{proof}

 A possible application of Proposition~\ref{proposition.bounded_rate} is to identify noise limited networks and interference limited networks  with a precise mathematical statement. More specifically, assume that we have a max-min fairness problem that can be posed in the canonical form in \refeq{eq.canonical}. For this problem, if $c_\infty$ in Proposition~\ref{proposition.bounded_rate}(iii) is infinity, then any utility (e.g., rate) is achievable if the power budget is sufficiently large. In contrast, if the lower bounding matrix $\signal{M}$ of the interference mapping $T$ satisfies $\rho(\signal{M})>0$, then Proposition~\ref{proposition.bounded_rate}(i) shows that the utility cannot grow unboundedly. In other words, interference cannot be overcome by increasing the power budget. As expected, the assumption $\rho(\signal{M})>0$ is typically satisfied in network optimization problems with links directly or indirectly coupled by interference; i.e., changes in the power of any link has influence on the interference received by any other link in the network.  If links are not coupled, then we usually have $c_\infty=\infty$ (which also implies $\rho(\signal{M})=0$), and in this case the utility can be as large as desired by using a sufficiently large power budget.

\subsection{Bounds on the energy efficiency}

We now proceed to derive bounds on the energy efficiency as a function of the power budget $\bar{p}$.

\begin{proposition}
	\label{proposition.monotone_ee}
Let $E:\real_{++}\to\real_{++}$ be the $\|\cdot\|_b$-energy efficiency function in Definition~\ref{def.mm_ee}. Then we have the following.
\begin{itemize}

\item[(i)] (Lower bound) Let $\alpha_1\in\real_++$ be an arbitrary scalar satisfying $\alpha_1 \|\signal{p}\|_b\le \|\signal{p}\|_a $ for every $\signal{p}\in\real^N$, and $\beta\in\real_{++}$ an arbitrary scalar satisfying $\|\signal{p}\|_\infty \le \beta \|\signal{p}\|_a$ for every $\signal{p}\in\real_{++}^N$ (see Fact~\ref{fact.norm_equiv}). Then 
\begin{align}
\label{eq.lb_ee}
(\forall {\bar{p}}\in\real_{++}) \quad \dfrac{\alpha_1}{ \|T(\bar{p}\beta\signal{1})\|_a} \le E(\bar{p}).
\end{align}

\item[(ii)] (Upper bound) Assume that $\rho(\signal{M})>0$, where $\signal{M}$ is the lower bounding matrix of the interference mapping $T:\real_{+}^N\to\real_{++}^N$. Then
\begin{align}
\label{eq.ub_ee}
(\forall {\bar{p}}\in\real_{++}) E(\bar{p}) < \dfrac{\alpha_2}{\rho(\signal{M}){\bar{p}}},
\end{align}
where  $\alpha_2\in \real_{++}$ is an arbitrary scalar satisfying $\|\signal{p}\|_a \le \alpha_2 \|\signal{p}\|_b$ for every $\signal{p}\in\real^N$.

\item[(iii)] (Maximum energy efficiency)
\begin{align}
\label{eq.blimit}
\lim_{\bar{p}\to 0^+} E(\bar{p})=\sup_{\bar{p}>0} E(\bar{p}) = \dfrac{1}{\|T(\signal{0})\|_b}.
\end{align}

\item[(iv)] (Combined upper bound) Let $\alpha_2\in\real_{++}$ be as in (ii). Then 
\begin{multline*}
(\forall \bar{p}\in\real_{++})~E(\bar{p}) \le \begin{cases}
\dfrac{\alpha_2}{\rho(\signal{M})\bar{p}} & \text{if } \bar{p}\ge k \\ 
\dfrac{1}{\|T(\signal{0})\|_b} & \text{ otherwise,}
\end{cases}
\end{multline*}
where $k:=\|T(\signal{0})\|_b/\rho(\signal{M})$.

\end{itemize}
\end{proposition}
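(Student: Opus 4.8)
The plan is to reduce the whole proposition to the two equivalent expressions for the energy efficiency in Definition~\ref{def.mm_ee}, namely $E(\bar{p})=1/\|T(P(\bar{p}))\|_b$ and $E(\bar{p})=U(\bar{p})/\|P(\bar{p})\|_b$, and then to inject the norm-equivalence constants $\alpha_1$, $\alpha_2$, $\beta$ together with the power- and utility-function bounds already established. For part~(i) I would start from $E(\bar{p})=1/\|T(P(\bar{p}))\|_b$ and upper bound the denominator. The hypothesis $\alpha_1\|\signal{p}\|_b\le\|\signal{p}\|_a$ applied to $\signal{p}=T(P(\bar{p}))$ gives $\|T(P(\bar{p}))\|_b\le\alpha_1^{-1}\|T(P(\bar{p}))\|_a$. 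Next I would reuse the coordinate-wise bound $P(\bar{p})\le\bar{p}\beta\signal{1}$ that is derived inside the proof of Proposition~\ref{proposition.bounded_rate}(iii) (it follows from $\|P(\bar{p})\|_\infty\le\beta\|P(\bar{p})\|_a=\beta\bar{p}$ via Fact~\ref{fact.equivalence}), so that monotonicity of $T$ (Fact~\ref{fact.concave_mappings}.2) and of $\|\cdot\|_a$ yield $\|T(P(\bar{p}))\|_a\le\|T(\bar{p}\beta\signal{1})\|_a$. Chaining the two inequalities and inverting produces \refeq{eq.lb_ee}.

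For part~(ii) I would instead use $E(\bar{p})=U(\bar{p})/\|P(\bar{p})\|_b$ and lower bound the denominator. The hypothesis $\|\signal{p}\|_a\le\alpha_2\|\signal{p}\|_b$ applied to $\signal{p}=P(\bar{p})$, together with $\|P(\bar{p})\|_a=\bar{p}$ from Fact~\ref{fact.equivalence}.1, gives $\|P(\bar{p})\|_b\ge\bar{p}/\alpha_2$. Combining this with the strict utility bound $U(\bar{p})<1/\rho(\signal{M})$ from Proposition~\ref{proposition.bounded_rate}(i), which is exactly where the assumption $\rho(\signal{M})>0$ enters, yields $E(\bar{p})\le\alpha_2 U(\bar{p})/\bar{p}<\alpha_2/(\rho(\signal{M})\bar{p})$, i.e.\ \refeq{eq.ub_ee}.

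Part~(iii) is where I expect the real work. The first equality $\lim_{\bar{p}\to0^+}E(\bar{p})=\sup_{\bar{p}>0}E(\bar{p})$ is simply the elementary fact that a non-increasing function (Lemma~\ref{lemma.ee_prop}) approaches its supremum as the argument tends to the left endpoint of its domain. The substantive claim is the value $1/\|T(\signal{0})\|_b$. Monotonicity of $T$ and of $\|\cdot\|_b$ already give $\|T(P(\bar{p}))\|_b\ge\|T(\signal{0})\|_b$, hence $E(\bar{p})\le 1/\|T(\signal{0})\|_b$ for every $\bar{p}$, bounding the supremum from above. For the matching value of the one-sided limit I would argue that $P(\bar{p})\to\signal{0}$ as $\bar{p}\to0^+$: since $\|P(\bar{p})\|_a=\bar{p}\to0$, convergence in the $\|\cdot\|_a$-norm forces $P(\bar{p})\to\signal{0}$ by equivalence of norms (Fact~\ref{fact.norm_equiv}). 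Then continuity of $T$ (assumed in Problem~\ref{problem.canonical}) and of $\|\cdot\|_b$ give $\|T(P(\bar{p}))\|_b\to\|T(\signal{0})\|_b$, so $E(\bar{p})\to1/\|T(\signal{0})\|_b$. The delicate point is the one-sided limit itself: I would phrase it through arbitrary sequences $\bar{p}_n\to0^+$, for which the argument above pins down $\lim_n E(\bar{p}_n)=1/\|T(\signal{0})\|_b$, and then invoke monotonicity to identify this limit with the supremum, giving \refeq{eq.blimit}.

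Finally, part~(iv) is immediate: both $E(\bar{p})<\alpha_2/(\rho(\signal{M})\bar{p})$ from (ii) and $E(\bar{p})\le1/\|T(\signal{0})\|_b$ from (iii) hold for \emph{every} $\bar{p}\in\real_{++}$, so the piecewise statement is valid for any threshold, and one simply reports whichever bound is of interest in the regimes $\bar{p}\ge k$ and $\bar{p}<k$. The only genuine obstacle in the whole proposition is the continuity/limit argument in (iii) --- justifying $P(\bar{p})\to\signal{0}$ and passing the one-sided limit through the continuous mapping $T$ --- whereas (i), (ii), and (iv) amount to bookkeeping with the norm-equivalence constants and the bounds from Proposition~\ref{proposition.bounded_rate}.
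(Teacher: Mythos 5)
Your proposal is correct and follows essentially the same route as the paper: (ii)--(iv) are argued identically, and (iii) uses the same sequence argument ($\|P(\bar{p}_n)\|_a=\bar{p}_n\to 0$, hence $P(\bar{p}_n)\to\signal{0}$, then continuity of $T$) combined with Lemma~\ref{lemma.ee_prop} to identify the one-sided limit with the supremum. The only cosmetic difference is in (i), where you bound $\|T(P(\bar{p}))\|_b$ directly via $P(\bar{p})\le\bar{p}\beta\signal{1}$ and monotonicity, while the paper divides the utility lower bound \refeq{eq.bounded_rate2} by $\|P(\bar{p})\|_b$ and uses $\bar{p}=\|P(\bar{p})\|_a\ge\alpha_1\|P(\bar{p})\|_b$; both rest on the same ingredients and yield the same constant.
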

\begin{proof}\par 
 (i) Divide both sides of \refeq{eq.bounded_rate2} by $\|P(\bar{p})\|_b>0$ and use $\bar{p}=\|P(\bar{p})\|_a \ge \alpha_1 \|P(\bar{p})\|_b$ to obtain the sought lower bound
 \begin{align*}
 E(\bar{p})=\dfrac{U(\bar{p})}{\|P(\bar{p})\|_b} \ge \dfrac{\bar{p}}{\|T(\bar{p}\beta\signal{1})\|_a ~\|P(\bar{p})\|_b} \ge \dfrac{\alpha_1}{ \|T(\bar{p}\beta\signal{1})\|_a}.
 \end{align*} \par 
 
(ii) 	Recall that $\|P(\bar{p})\|_a=\bar{p}$ by Fact~\ref{fact.equivalence}.1 for every $\bar{p}\in\real_{++}$. Since $\rho(\signal{M})>0$ by assumption, it follows from  Proposition~\ref{proposition.bounded_rate}(i) that $(\forall \bar{p}\in\real_{++})~U(\bar{p}) < 1/\rho(\signal{M})$.  Therefore, 
\begin{multline*}
(\forall\bar{p}\in\real_{++})~
E(\bar{p})=\dfrac{U(\bar{p})}{\|P(\bar{p})\|_b} < \dfrac{1}{\rho(\signal{M}) \|P(\bar{p})\|_b} \\ \le \dfrac{\alpha_2}{\rho(\signal{M}) \|P(\bar{p})\|_a}=\dfrac{\alpha_2}{\rho(\signal{M}) \bar{p}},
\end{multline*}
which concludes the proof of the upper bound.

 \par

(iii) Existence of the limit $\lim_{\bar{p}\to 0^+} E(\bar{p})$ and the equality $\lim_{\bar{p}\to 0^+} E(\bar{p})=\sup_{\bar{p}>0} E(\bar{p})$ is immediate from non-negativity of $E$ and Lemma~\ref{lemma.ee_prop}. To obtain the value of the limit, denote by $U$ and $P$ the utility and power functions, respectively. Let $(\bar{p}_n)_{n\in\Natural}\subset\real_{++}$ be an arbitrary sequence satisfying $\lim_{n\to\infty} \bar{p}_n = 0$, and let the sequence of tuples $\left((\signal{p}_n,c_n)\right)_{n\in\Natural}\subset\real^N_{++}\times\real_{++}$ be given by $(\signal{p}_n,c_n):=(P(\bar{p}_n), U(\bar{p}_n))$ for every $n\in\Natural$.  Since $\lim_{n\to\infty} \|\signal{p}_n\|_b = \lim_{n\to\infty} \bar{p}_n = 0$, we have $\lim_{n\to\infty} \signal{p}_n = \signal{0}$. Recalling that $T:\real_{+}^N\to\real_{++}^N$ is continuous by assumption and that norms are continuous, we obtain:
\begin{align*}
\lim_{n\to \infty} \dfrac{1}{E(\bar{p}_n)}=\lim_{n\to \infty}\|T(\signal{p}_n)\|_b = \|T(\signal{0})\|_b>0,
\end{align*}
which implies \refeq{eq.blimit}.

(iv) Immediate from (ii) and (iii).
\end{proof}

 \subsection{Asymptotic behavior of the utility and the energy efficiency functions}

With the results obtained in the previous subsections, we have all the ingredients necessary to study the behavior of the utility and energy efficiency functions as $\bar{p}\to\infty$. We start by studying the utility function.

\begin{Cor}
	Assume that the lower bounding matrix $\signal{M}$ of the concave mapping $T$ in Problem~\ref{problem.canonical} satisfies $\rho(\signal{M})>0$. Then $U(\bar{p})\in \Theta(1)$ as $\bar{p}\to\infty$.
\end{Cor}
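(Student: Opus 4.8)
The plan is to unwind the definition of $\Theta(1)$ given in Sect.~\ref{sect.preliminaries}: it suffices to exhibit constants $k_0,k_1,k_2\in\real_{++}$ such that $k_1\le U(\bar{p})\le k_2$ for every $\bar{p}\ge k_0$. The two bounds can be established independently, and both are already essentially available from the earlier results.

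For the upper bound I would invoke Proposition~\ref{proposition.bounded_rate}(i). The hypothesis $\rho(\signal{M})>0$ is precisely the condition under which that proposition yields the second inequality $U(\bar{p})<1/\rho(\signal{M})$ for every $\bar{p}\in\real_{++}$, so I can simply take $k_2:=1/\rho(\signal{M})>0$. Note that this bound is uniform in $\bar{p}$, so no restriction on $\bar{p}$ is needed for the ceiling, and the strict inequality trivially implies the non-strict one required by the definition of $\Theta$.

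For the lower bound I would exploit the monotonicity of the utility function recorded in Fact~\ref{fact.equivalence}.2, namely that $U$ is increasing. Fixing any reference budget, say $\bar{p}_0:=1$, and setting $k_0:=\bar{p}_0$ and $k_1:=U(\bar{p}_0)$, monotonicity gives $U(\bar{p})\ge U(\bar{p}_0)=k_1$ for all $\bar{p}\ge k_0$; moreover $k_1>0$ because $U$ takes values in $\real_{++}$. Combining the two bounds yields $k_1\le U(\bar{p})\le k_2$ for all $\bar{p}\ge k_0$, which is exactly $U(\bar{p})\in\Theta(1)$ as $\bar{p}\to\infty$.

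I do not anticipate any genuine obstacle: the corollary is a direct packaging of Proposition~\ref{proposition.bounded_rate}(i) (supplying the finite ceiling) together with the monotonicity in Fact~\ref{fact.equivalence}.2 (supplying the positive floor). The only point requiring attention is bookkeeping, namely verifying that the chosen constants fit the quantifier structure of the $\Theta$ definition, and observing that the role of $\rho(\signal{M})>0$ is solely to guarantee the power-independent ceiling $1/\rho(\signal{M})$; without it the utility could diverge, as reflected in Proposition~\ref{proposition.bounded_rate}(iv).
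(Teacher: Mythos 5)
Your proposal is correct and follows essentially the same route as the paper: the ceiling comes from Proposition~\ref{proposition.bounded_rate}(i) under $\rho(\signal{M})>0$, and the positive floor from the monotonicity of $U$ in Fact~\ref{fact.equivalence}.2. The only cosmetic difference is that you exhibit the constants $k_0,k_1,k_2$ explicitly, whereas the paper phrases the same argument via the existence of a positive limit $\lim_{\bar{p}\to\infty}U(\bar{p})$.
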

\begin{proof}
	With the assumption $\rho(\signal{M})>0$, Proposition~\ref{proposition.bounded_rate}(i) shows that the utility function $U$ is bounded above by $1/\rho(\signal{M})$. Furthermore, Fact~\ref{fact.equivalence}.2 shows that $U$ is increasing, hence the limit $\lim_{\bar{p}\to\infty} U(\bar{p})>0$ exists, which implies $U(\bar{p})\in \Theta(1)$ as $\bar{p}\to\infty$. 
\end{proof}

 The next example shows that known results in the literature \cite[Ch. 5]{slawomir09} are particular cases of the upper bound in Proposition~\ref{proposition.bounded_rate}(ii). In addition, this example proves that the upper bound $1/\rho(\signal{M})$ for the utility (assuming $\rho(\signal{M})>0$)  is asymptotically sharp for an important class of interference mappings that are common in max-min fairness problems in wireless networks \cite{slawomir09}.
 
 \begin{example}
 	\label{example.tight}
 	Let $\signal{\sigma}\in\real_{++}^N$ be arbitrary and assume that $\signal{M}\in\real_{+}^{N\times N}$ is a matrix satisfying $\rho(\signal{M})>0$. Now consider Problem \ref{problem.canonical} with the affine interference mapping $T:\real_+^N\to\real_{++}^N:\signal{p}\mapsto\signal{M}\signal{p}+\signal{\sigma}$, and note that the lower bounding matrix of $T$ is the matrix $\signal{M}$. The results in \cite[Theorems A.16 and A.51]{slawomir09} show that, for any $\epsilon\in ~]0,1[$, the vector $\signal{p}_\epsilon={c}_\epsilon(\signal{I}-{c}_\epsilon \signal{M})^{-1} \signal{\sigma}$ is strictly positive, where we define $c_\epsilon:=\epsilon/\rho(\signal{M})$. Therefore, given the power budget $\bar{p}_\epsilon = \|{c}_\epsilon(\signal{I}-{c}_\epsilon \signal{M})^{-1}\signal{\sigma}\|_a=\|\signal{p}_\epsilon\|_a$, we have $\signal{p}_\epsilon\in\mathrm{Fix}(c_\epsilon T)\neq\emptyset$. Fact~\ref{fact.equivalence}.1 now shows that the tuple $(\signal{p}_\epsilon, {c}_\epsilon)$ is the solution to Problem~\ref{problem.canonical} with the power budget $\bar{p}_\epsilon=(\epsilon/\rho(\signal{M})) ~ \|(\signal{I}-(\epsilon/\rho(\signal{M}))~\signal{M})^{-1}\signal{\sigma}\|_a$. This result proves that any utility strictly smaller than $1/\rho(\signal{M})$ can be achieved for the affine interference mappings considered in this example, provided that the power budget is large enough. More precisely, Fact~\ref{fact.equivalence}.2 shows that 
 	\begin{align}
 	\label{eq.bd_asymp}
 	(\forall\epsilon\in~]0,1[)(\forall  \bar{p}\ge \bar{p}_{{\epsilon}}) \quad U(\bar{p}) \ge c_\epsilon=\epsilon/\rho(\signal{M}),
 	\end{align}
 	where $U$ is the utility function in Definition~\ref{def.mm_ee}. 
 \end{example}

Scaling properties of the energy efficiency function are shown in the next corollary.

\begin{Cor} 
	\label{cor.asymp_e}
	Let $\signal{M}$ be the lower bounding matrix of the interference mapping $T$ in Problem~\ref{problem.canonical}. The $\|\cdot\|_b$-energy efficiency function has the following properties:
	\begin{itemize}
		
	\item[(i)] The limit $\lim_{\bar{p}\to\infty} E(\bar{p})$ exists and $\inf_{\bar{p}>0} E(\bar{p}) = \lim_{\bar{p}\to\infty} E(\bar{p})<\infty $.  Furthermore, if $\rho(\signal{M})>0$, then $\lim_{\bar{p}\to\infty} E(\bar{p})=0$.
		\item[(ii)] Assume that $\rho(\signal{M})>0$, then		
		\begin{align*}
		(\exists k_1\in\real_{++})(\exists k_2\in\real_{++})(\forall \bar{p}\in\real_{++})
		 \dfrac{1}{k_1 \bar{p}+ k_2} \le E(\bar{p}).
		\end{align*}	
		\item[(iii)] With the assumption in (ii), we also have  $E(\bar{p})\in \Theta(1/\bar{p})$ as $\bar{p}\to\infty$.
	\end{itemize}	
\end{Cor}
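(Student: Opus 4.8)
The plan is to obtain all three parts directly from the bounds already established, since the genuine analytic work resides in Lemma~\ref{lemma.ee_prop}, Proposition~\ref{proposition.monotone_ee}, and Proposition~\ref{proposition.affine_bound}; what remains is essentially assembling these into the stated forms.

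For part (i), I would first recall that $E$ is non-increasing by Lemma~\ref{lemma.ee_prop} and strictly positive by Definition~\ref{def.mm_ee}. A non-increasing function that is bounded below (here by $0$) converges as its argument tends to infinity, and the limit coincides with the infimum, which yields the existence of $\lim_{\bar{p}\to\infty}E(\bar{p})=\inf_{\bar{p}>0}E(\bar{p})$. Finiteness is immediate because this infimum is bounded above by $E(\bar{p}_0)<\infty$ for any fixed $\bar{p}_0\in\real_{++}$. For the final claim, assuming $\rho(\signal{M})>0$, I would invoke the upper bound $E(\bar{p})<\alpha_2/(\rho(\signal{M})\bar{p})$ of Proposition~\ref{proposition.monotone_ee}(ii); letting $\bar{p}\to\infty$ squeezes $E(\bar{p})$ between $0$ and a quantity vanishing like $1/\bar{p}$, so the limit must be $0$.

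For part (ii), the key is to convert the lower bound $E(\bar{p})\ge \alpha_1/\|T(\bar{p}\beta\signal{1})\|_a$ of Proposition~\ref{proposition.monotone_ee}(i) into a bound with an affine denominator in $\bar{p}$. Applying Proposition~\ref{proposition.affine_bound} with the fixed strictly positive vector $\signal{x}=\beta\signal{1}\in\real_{++}^N$ and the norm $\|\cdot\|_a$ produces constants such that $\|T(\bar{p}\beta\signal{1})\|_a\le k_1'\bar{p}+k_2'$ for all $\bar{p}\in\real_+$. Substituting this and absorbing $\alpha_1$ into the constants (i.e., setting $k_1:=k_1'/\alpha_1$ and $k_2:=k_2'/\alpha_1$) gives $E(\bar{p})\ge 1/(k_1\bar{p}+k_2)$ for every $\bar{p}\in\real_{++}$, as required.

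For part (iii), I would combine the two one-sided estimates against $g(\bar{p})=1/\bar{p}$. Proposition~\ref{proposition.monotone_ee}(ii) already supplies the upper estimate $E(\bar{p})\le (\alpha_2/\rho(\signal{M}))\,(1/\bar{p})$, valid for all $\bar{p}\in\real_{++}$. For the matching lower estimate I would start from part (ii) and note that for $\bar{p}\ge 1$ one has $k_1\bar{p}+k_2\le (k_1+k_2)\bar{p}$, hence $E(\bar{p})\ge (1/(k_1+k_2))\,(1/\bar{p})$. With the threshold $k_0=1$ and the constants just obtained, this verifies the definition of $\Theta(1/\bar{p})$ given in Sect.~\ref{sect.preliminaries}. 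I expect no serious obstacle here; the only point requiring care is the bookkeeping in part (iii), namely turning the affine-denominator lower bound of part (ii) into a clean $c/\bar{p}$ lower bound that holds uniformly beyond a single explicit threshold.
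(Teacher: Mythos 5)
Your proof is correct and follows essentially the same route as the paper: part (i) from monotonicity (Lemma~\ref{lemma.ee_prop}) plus the $1/\bar{p}$ upper bound of Proposition~\ref{proposition.monotone_ee}(ii), part (ii) from Proposition~\ref{proposition.monotone_ee}(i) combined with Proposition~\ref{proposition.affine_bound} applied at $\beta\signal{1}$, and part (iii) by sandwiching $E$ between two $c/\bar{p}$ estimates for $\bar{p}\ge 1$. The only cosmetic difference is that you absorb $\alpha_1$ into the constants and work with $\|\cdot\|_a$ throughout, whereas the paper invokes norm equivalence to phrase the intermediate bound in $\|\cdot\|_b$; this changes nothing of substance.
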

\begin{proof}
	(i) Immediate from Lemma~\ref{lemma.ee_prop}, non-negativity of $E$, and Proposition~\ref{proposition.monotone_ee}(ii). \\ 
	
	(ii) Proposition~\ref{proposition.affine_bound}, Proposition~\ref{proposition.monotone_ee}(i), and the equivalence of norms show that there exists $(k_1, k_2)\in\real_{++}\times\real_{++}$ such that $0<1/E(\bar{p}) \le \|T(\bar{p} \beta\signal{1})\|_b \le k_1 \bar{p}+ k_2$ for every $\bar{p}\in\real_{++}$, and the desired inequality follows. 
	
	(iii) By the inequality obtained in (ii) and Proposition~\ref{proposition.monotone_ee}(ii), we have:
	\begin{multline*}
	(\exists k_1\in\real_{++})(\exists k_2\in\real_{++})(\exists k_3\in\real_{++})(\forall \bar{p}\ge 1)  \\
	\quad \dfrac{1}{ (k_1+k_2) \bar{p}}  \le \dfrac{1}{ \|T(\bar{p}\beta\signal{1})\|_b}  \le E(\bar{p}) \le  
	\dfrac{k_3}{\rho(\signal{M})\bar{p}},
	\end{multline*}
	and hence $E(\bar{p})\in\Theta(1/\bar{p})$ as $\bar{p}\to\infty$.
\end{proof}

We can also show that the upper bound in Proposition~\ref{proposition.monotone_ee}(ii) is as sharp as possible, in the sense that, if we consider all positive concave mappings $T:\real_+^N\to\real_{++}^N$, there are mappings for which the upper bound is achieved asymptotically as the power budget diverges to infinity:

\begin{example}
	\label{example.ee_affine}
	Consider the model and definitions in Example~\ref{example.tight}. Recalling that $\|P(\bar{p})\|_a=\bar{p}$ for every $\bar{p}\in\real_{++}$ by Fact~\ref{fact.equivalence}.1, where $P:\real_{++}\to\real_{++}^N$ is the power function in Definition~\ref{def.mm_ee}, we deduce from \refeq{eq.bd_asymp}: 
	\begin{align*}
	(\forall\epsilon\in~]0,1[)(\forall  \bar{p}\ge \bar{p}_{{\epsilon}})~E(\bar{p})=\dfrac{U(\bar{p})}{\|P(\bar{p})\|_a} = \dfrac{U(\bar{p})}{\bar{p}} \ge \dfrac{\epsilon}{\rho(\signal{M}) \bar{p}},
	\end{align*}
	where $E$ is the $\|\cdot\|_a$-energy efficiency function in Definition~\ref{def.mm_ee}. In words, the above proves that the $\|\cdot\|_a$-energy efficiency function can be made arbitrarily close to the upper bound in Proposition~\ref{proposition.monotone_ee}(ii) with the choice $\alpha_1=\alpha_2=1$ for a sufficiently large power budget $\bar{p}$. 
\end{example}

\subsection{Discussion - Power regimes in network utility maximization problems}
\label{sect.illustration}

The upper bound in Proposition~\ref{proposition.bounded_rate}(ii) motivates the definition of two power regimes: a low power regime ($\bar{p}\le \|T(\signal{0})\|_a/\rho(\signal{M})$), where the linear bound in Proposition~\ref{proposition.bounded_rate}(ii) is effective, and a high power regime ($\bar{p} > \|T(\signal{0})\|_a/\rho(\signal{M})$), where the constant bound in Proposition~\ref{proposition.bounded_rate}(ii) is effective. The transition between these  two regimes happens at the transient point $\bar{p} = u := \|T(\signal{0})\|_a/\rho(\signal{M})$. For convenience, let us focus for the moment on the $\|\cdot\|_a$-energy efficiency function in Definition~\ref{def.mm_ee}, which we denote by $E_a$. In the low power regime, by \refeq{eq.blimit}, we verify that $E(\bar{p})\in\Theta(1)$ as $\bar{p}\to 0^+$, hence the decay in energy efficiency as $\bar{p}$ increases can be small. In contrast, the utility $U$ increases at best linearly in this power regime (see Proposition~\ref{proposition.bounded_rate}(ii)). These observations show that we should transmit with low power and low utility (e.g., rates) if high transmit energy efficiency is desired, and we emphasize that we have proved this expected result by using a very general model that unifies, within a single framework, the behavior of a large array of transmission technologies in wireless networks.

In the high power regime, the energy efficiency eventually decays quickly as the power budget $\bar{p}$ diverges to infinity because $E_a(\bar{p})\in\Theta(1/\bar{p})$ as $\bar{p}\to\infty$, while gains in utility eventually saturate because of the uniform bound $(\forall\bar{p}\in\real_{++})~U(\bar{p})\le 1/\rho(\signal{M})$. The above observations are illustrated in Fig.~\ref{fig.example}. Note that the $\|\cdot\|_a$-energy efficiency function $E_a$ is continuous, converges to the upper bound $1/\|T(\signal{0})\|_a$ as the power budget $\bar{p}$ decreases to zero, and decreases within the hatched areas corresponding to the power regimes. Furthermore, the utility function $U$ is also continuous, converges to zero as the power budget $\bar{p}$ decreases to zero, and increases within the hatched areas corresponding to the power regimes.

Interestingly, the above properties of the energy efficiency and the utility  hold for all network utility maximization problems that can be posed in the canonical form in \refeq{eq.canonical}. Equipping the network with advanced communication technologies such as massive multi-antenna transmission schemes, transmitter and receiver beamforming techniques, intelligent user-base station assignment mechanisms, or a combination of all these technologies can only change the position of the transient point and move upwards the bounds on the energy efficiency and the utility. However, if $\rho(\signal{M})>0$,  these two power regimes are always present.

\begin{figure}
	\centering
	\begin{subfigure}[b]{0.9\linewidth}
		\centering
		\includegraphics[width=\linewidth,keepaspectratio,bb=30mm 10mm 295mm 183mm,clip]{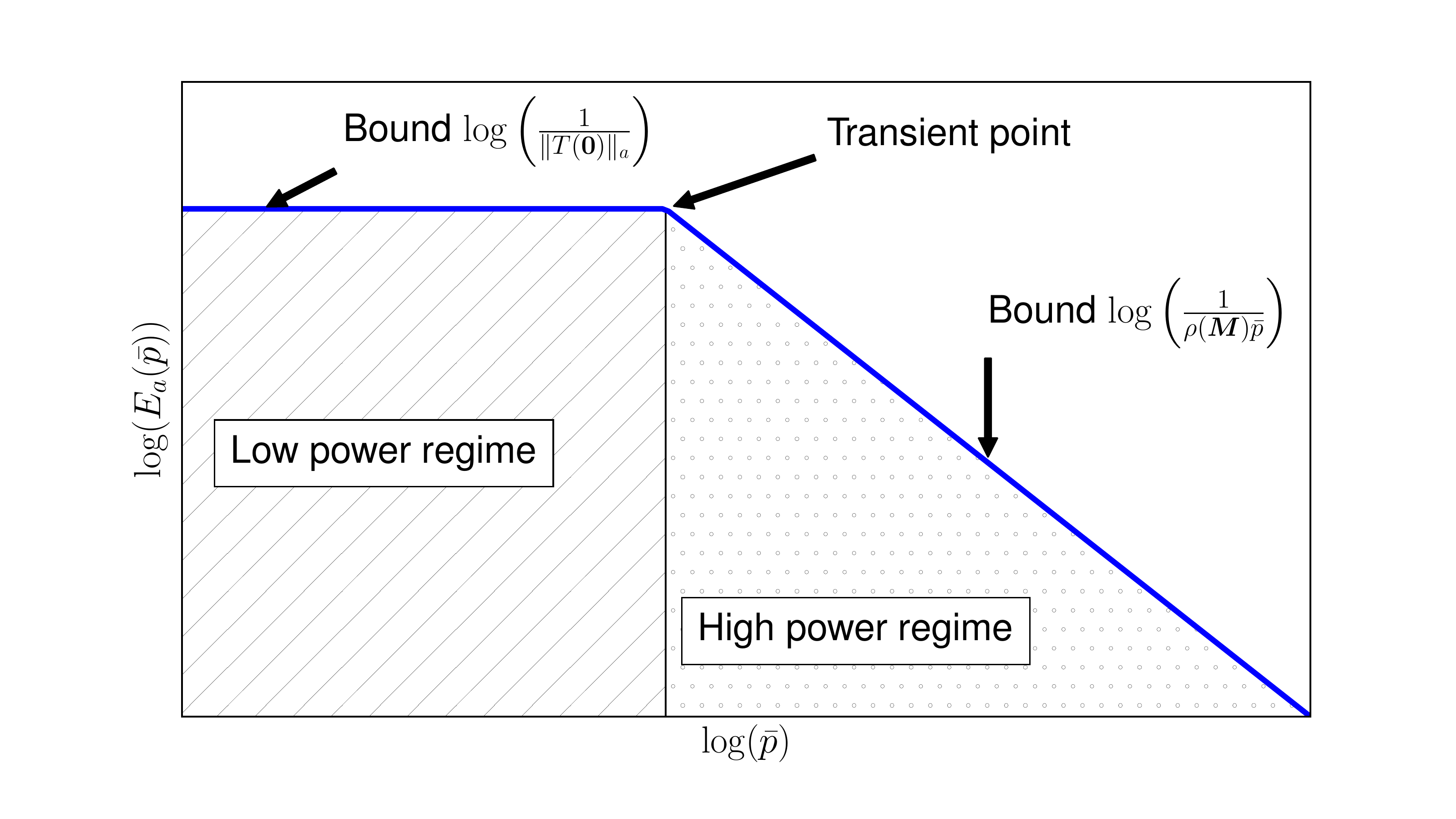}
		\caption{~}
		\label{fig.ee}
	\end{subfigure}
	\begin{subfigure}[b]{0.9\linewidth}
		\centering
		\includegraphics[width=\linewidth,keepaspectratio,bb=30mm 10mm 295mm 183mm,clip]{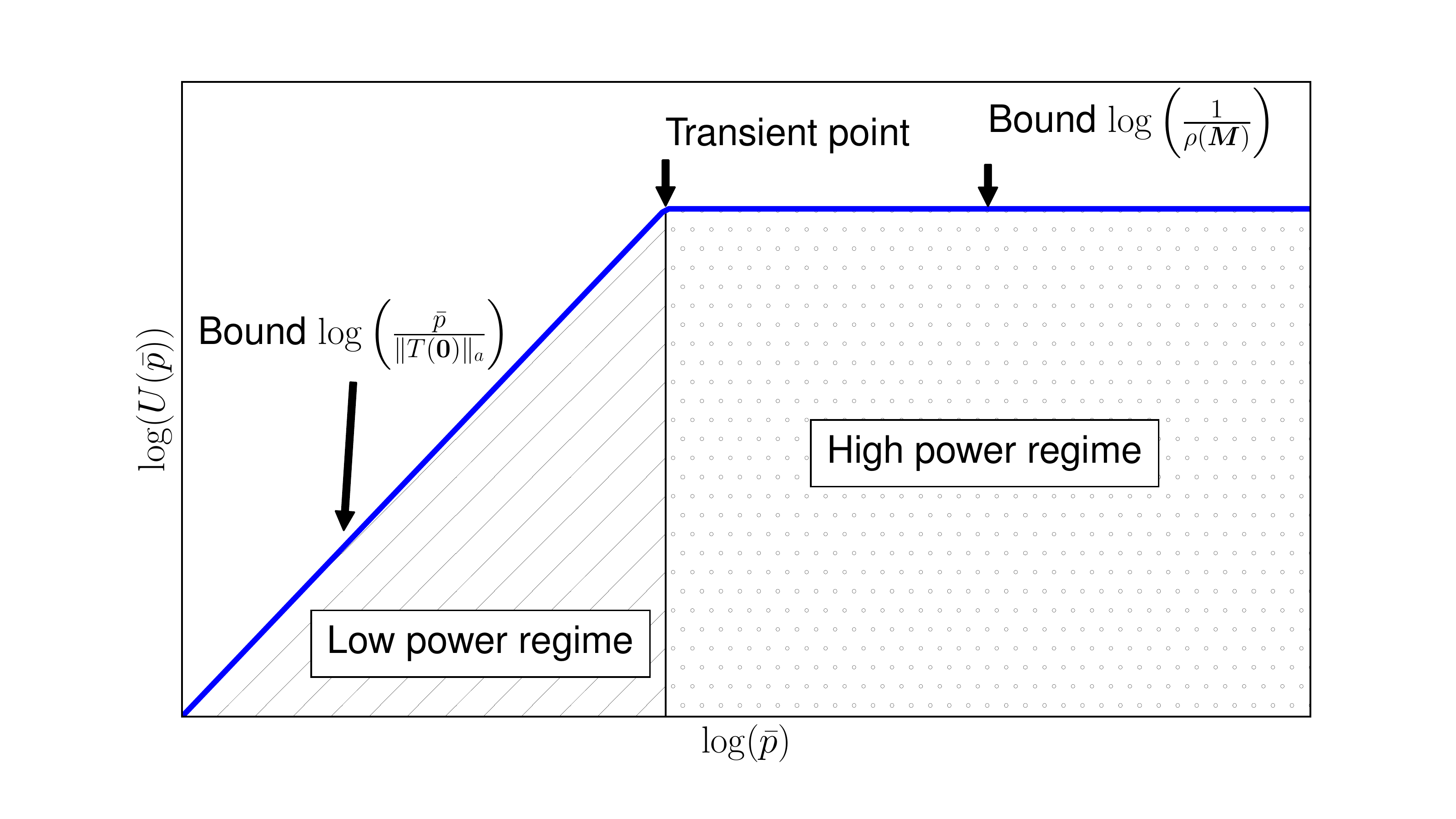}
		\caption{~}
		\label{fig.power_increase_load}
	\end{subfigure} 
	\caption{Power regimes for the solution to Problem~\ref{problem.canonical} as a function of the power budget. The lower bounding matrix $\signal{M}$ associated with the mapping $T$ is assumed to satisfy $\rho({\signal{M}})>0$. (a) Energy efficiency. (b) Utility. }\label{fig.example}
\end{figure}

\section{Exemplary application}
\label{sect.examples}

We now illustrate the results obtained in the previous section with a concrete application; namely, a novel joint power control and base station allocation problem for weighted rate maximization.

\subsection{System model and the proposed algorithm}
\label{sect.joint_pc_ba}

Denote by $\setm$ and by $\setn:=\{1,\ldots,N\}$ the set of base stations and the set of users, respectively, in a network. The pathloss between base station $i\in\setm$ and user $j\in\setn$ is indicated by the variable $g_{i,j}\in\real_{++}$. Let $\signal{p}=[p_1,\ldots,p_N]\in\real_{++}^N$ be the uplink power vector, where $p_j\in\real_{++}$ is the uplink transmit power of user $j\in\setn$. The SINR of user $j\in\setn$ if connected to base station $i\in\setm$ for a given power allocation $\signal{p}=[p_1,\dots,p_N]\in\real_{++}^N$ is given by 
\begin{align*}
s_j:\real_{++}^N\times\setm\to\real_+:(\signal{p},i)\mapsto\dfrac{p_j g_{i,j}}{\sum_{k\in\setn\setminus\{j\}} p_k g_{i,k} + \sigma^2_i}.
\end{align*}

We assume that users connect to their best serving base stations and that each user is equipped with a single-user decoder that treats interference as noise. With these assumptions, the achievable rate of user $j\in\setn$ for a given power allocation $\signal{p}\in\real_{++}^N$ under Gaussian noise is commonly approximated by $\max_{i\in\setm} B\log_2(1+s_{j}(\signal{p},i))$, where $B$ is the system bandwidth. 

The utility maximization problem we solve in this section, which we refer to as the weighted rate allocation problem, is formally stated as:
\begin{problem} (The weighted rate allocation problem)
	\label{problem.pro_fairness}
	\begin{align}
	\label{eq.pro_fairness}
	\begin{array}{lll}
	\mathrm{maximize}_{\signal{p}, c} & c \\
	\mathrm{subject~to} & c w_j = \max_{i\in\setm}  B\log_2(1+s_{j}(\signal{p},i))~(\forall j\in\setn)\\
	& \|\signal{p}\|_\infty \le \bar{p} \\
	& \signal{p}\in\real_{++}^N, c\in\real_{++},
	\end{array}
	\end{align}
\end{problem}
where $w_j\in\real_{++}$ is the weight or priority assigned to the rate $c\omega_j$ of user $j\in\setn$. 

To gain intuition on the weights $(w_j)_{j\in\setn}$, we now relate two particular choices to traditional network utility maximization problems. The simplest choice is the use of uniform weights $w_k=w_j=1$ for every $(k,j)\in\setn\times\setn$, in which case the solution to Problem~\ref{problem.pro_fairness} can be shown to maximize the minimum observed rate in the network (max-min fairness). As a second example of a weighting scheme related to that used later in the simulations, for a fixed power $\bar{u}\in\real_{++}$, we can use $w_j=\max_{i\in\setm} B\log_2(1+ \bar{u}g_{i,j}/\sigma^2_i)$ for every $j\in\setn$. With this choice, $w_j$ is simply the best rate that each user $j$ can achieve if alone in the system and transmitting at full power $\bar{u}$. In particular, if $\bar{u}=\bar{p}$, then the solution $(\signal{p}^\star, c^\star)$ to Problem~\ref{problem.pro_fairness}  allocates to each user $j\in\setn$ the fraction $c^\star\in~]0,1]$ of its best individual achievable rate $w_j$. The number $c^\star$ is the largest fraction common to all users that the network can support.

We now proceed to state Problem~\ref{problem.pro_fairness} in the canonical form in \refeq{eq.canonical}. For any $\signal{p}\in\real^N_{++}$, $c\in\real_{++}$, and $j\in\setn$ satisfying the first constraint in \refeq{eq.pro_fairness}, we have
\begin{multline*}
c w_j = \max_{i\in\setm}  B\log_2(1+s_{j}(\signal{p},i)) \\ \Leftrightarrow \dfrac{1}{c}=\min_{i\in\setm}\dfrac{w_j}{B\log_2(1+s_{j}(\signal{p},i))}  
 \\ \Leftrightarrow \dfrac{1}{c} p_j=\min_{i\in\setm}\dfrac{w_j p_j}{B\log_2(1+s_{j}(\signal{p},i))},
\end{multline*}
which shows that the first constraint in Problem \ref{problem.pro_fairness} can be equivalently written as
\begin{align}
\label{eq.opendomain}
\signal{p}\in\mathrm{Fix}(c\tilde{T}),
\end{align}
where $\tilde{T}:\real_{++}^N\to\real_{++}^N:\signal{p}\mapsto [\tilde{t}_1(\signal{p}),\ldots,\tilde{t}_N(\signal{p})]$ and
\begin{align*}
\begin{array}{rcl}
\tilde{t}_j:\real_{++}^N&\to&\real_{++}^N\\
\signal{p}&\mapsto&\min_{i\in\setm}\dfrac{w_j p_j}{B\log_2(1+s_{j}(\signal{p},i))}
\end{array}
\end{align*}
for every $j\in\setn$.

For fixed $i\in\setm$, we know by \cite{cai2012optimal,renato14SIP}\cite[Lemma 2]{renato2016power} that ${w_j p_j}/{(B\log_2(1+s_{j}(\signal{p},i)))}$ as a function of $\signal{p}\in\real^N_{++}$ is concave\footnote{After the submission of the first version of the manuscript, one of the reviewers pointed out that \cite{cai2012optimal} is possibly the first study to show that this function is concave. The studies in \cite{renato14SIP,renato2016power} show all formal details of the continuous extension of this function to the boundary of the domain. This continuous extension is crucial to the bounds we derive because the mapping $T$ in Problem
	\ref{problem.canonical} is assumed to be continuous. It is also worth mentioning that the function $\log(1+\text{SINR})$ is known to be log-log concave \cite{papa2008}, a property that has already been exploited for many years to solve utility maximization problems without using common simplifications such as those based on the high SINR assumption considered in \cite{chiang2005}, for example.} for every $j\in\setn$, hence $\tilde{t}_j$ is also concave for every $j\in\setn$ because it is the minimum of concave functions \cite[Proposition~8.14]{baus11}. Therefore, by replacing the rate constraint in Problem~\ref{problem.pro_fairness} by the constraint in \refeq{eq.opendomain}, we have Problem~\ref{problem.pro_fairness} in the canonical form in \refeq{eq.canonical}, except that the domain of the mapping $\tilde{T}$ must be extended to include the boundary $\real_{+}^N\setminus\real_{++}^N$. By \cite[Theorem~10.3]{rock70}, we know that each function  $\tilde{t}_j$, $j\in\setn$, has one and only one continuous extension to $\real_{+}^N$, and by following similar arguments to those used in the proof of \cite[Lemma~3]{renato2016power}, the continuous extension of the mapping $\tilde{T}$ is given by  
\begin{align}
\label{eq.cont_ext}
{T}:\real_{+}^N\to\real_{++}^N:\signal{p}\mapsto [{t}_1(\signal{p}),\ldots,{t}_N(\signal{p})],
\end{align}
where
\begin{align*}
\begin{array}{rcl}
t_j:\real_{+}^N&\to&\real_{++}\\
\signal{p}&\mapsto&\begin{cases}\min_{i\in\setm}\dfrac{w_j p_j}{B\log_2(1+s_{j}(\signal{p},i))}~~  \text{if }p_j\neq 0 \\
\min_{i\in\setm}\dfrac{w_j\ln 2}{Bg_{i,j}}\left(\sum_{k\in\setn\backslash\{j\}}p_k g_{i,k}+\sigma^2_i\right) \\ \qquad\qquad\qquad \text{otherwise,}
\end{cases}
\end{array}
\end{align*}
for every $j\in\setn$.

The above discussion shows that the weighted rate allocation problem is equivalent to the following problem in the canonical form:
\begin{problem}
	\label{problem.pro_fairness_canonical}
	\begin{align*}
	\begin{array}{lll}
	\mathrm{maximize}_{\signal{p}, c} & c \\
	\mathrm{subject~to} & \signal{p}\in\mathrm{Fix}(cT) \\
	& \|\signal{p}\|_\infty \le \bar{p} \\
	& \signal{p}\in\real_{+}^N,~ c\in\real_{++},
	\end{array}
	\end{align*}
\end{problem}
where in the application under consideration the mapping $T:\real_{+}^N\to\real_{++}^N$ is given by \refeq{eq.cont_ext}, and we assume that $\bar{p}>0$. 
In light of Fact~\ref{fact.equivalence}.1, we can now use the simple iterative scheme in \refeq{eq.krause_iter} with the monotone norm $\|\signal{p}\|:=\|\signal{p}\|_a/\bar{p}$ to solve Problem~\ref{problem.pro_fairness}. Furthermore, all bounds derived in Sect.~\ref{sect.ee} are available because the continuous mapping $T$ is defined at $\signal{0}$.  Once a solution $(\signal{p}^\star,c^\star)$ is obtained, we recover an optimal user-base station assignment by connecting each user $j\in\setn$ to any base station $i\in\argmax_{l\in\setm}s_j(\signal{p}^\star,l)$ providing the best SINR $\max_{l\in\setm}s_j(\signal{p}^\star,l)$.

Note that the component of the $k$th column and $j$th row of the lower bounding matrix $\signal{M}\in\real_{+}^{N\times N}$ of the interference mapping $T$ in \refeq{eq.cont_ext} is given by
\begin{align*}
[\signal{M}]_{j,k}=\begin{cases}
0, &\mathrm{if}\quad k=j \\
\min_{i\in\setm}\dfrac{w_j g_{i,k}\mathrm{ln}(2)}{Bg_{i,j}} & \mathrm{otherwise.}
\end{cases}
\end{align*}

With the assumption $g_{i,k}\in\real_{++}$ for every $i\in\setm$ and $j\in\setn$, we observe that $\signal{M}$ is irreducible \cite[Sect.~A.4.1]{slawomir09} because only the diagonal terms of the non-negative matrix $\signal{M}$ are zero. Therefore, we conclude that $\rho(\signal{M})>0$ because the spectral radius of an arbitrary irreducible matrix is positive \cite[p. 673]{meyer2000}. The practical implication of this fact is that, by Proposition~\ref{proposition.bounded_rate}, the network considered here is interference limited because the rates cannot grow unboundedly by increasing the power budget. 

The bounds in Proposition~\ref{proposition.bounded_rate} and Proposition~\ref{proposition.monotone_ee} can also provide us with intuition on optimal assignment strategies in the low power regime. For example, Proposition~\ref{proposition.monotone_ee} shows that the energy efficiency function, which is continuous, converges to $1/\|T(\signal{0})\|_a$ as the power budget decreases to zero. The $j$th component of the vector $T(\signal{0})$ is $t_j(\signal{0})=\min_{i\in\setm}w_j \sigma_i^2 \ln 2/(B g_{i,j})$ by definition, which shows that, as the power budget decreases to zero, each user $j\in\setn$ selects the base station $i\in\setm$ with the smallest ratio $\sigma_i^2/g_{i,j}$. If the noise power $\sigma_i^2$ is the same at every base station $i\in\setm$, then we obtain the intuitive result that each user selects the base station with the best propagation condition, which is the expected result in a regime where noise dominates interference. Note that this result is valid for any choice of positive weights $(\omega_j)_{j\in\setn}$.

\subsection{Simulations}
\label{sect.joint_pc_ba_simulations}
The simulations in this section are based on the ``dense urban information society'' scenario provided by the METIS project \cite[Sect. 4.2]{metis_D_6_1}\cite{metis_raytrace}, which is depicted in Fig.~\ref{fig:madrid}. For simplicity, we only use the microcells of that scenario. The pathloss between users and microcells are obtained from the lookup tables available at \cite{metis_raytrace}. For the simulation, we place 30 users uniformly at random on the streets or the park within the box region $10\mathrm{m} \le x \le 377\mathrm{m}$ and  $10\mathrm{m} \le y \le 542\mathrm{m}$ of the network in Fig.~\ref{fig:madrid}. Users are free to connect to any microcell in the whole region. The noise power spectral density at every base station is $-145$~dBm/Hz, and the total system bandwidth is $10$~MHz. 

\begin{figure}
	\begin{center}
		\includegraphics[width=0.6\linewidth, angle=-90]{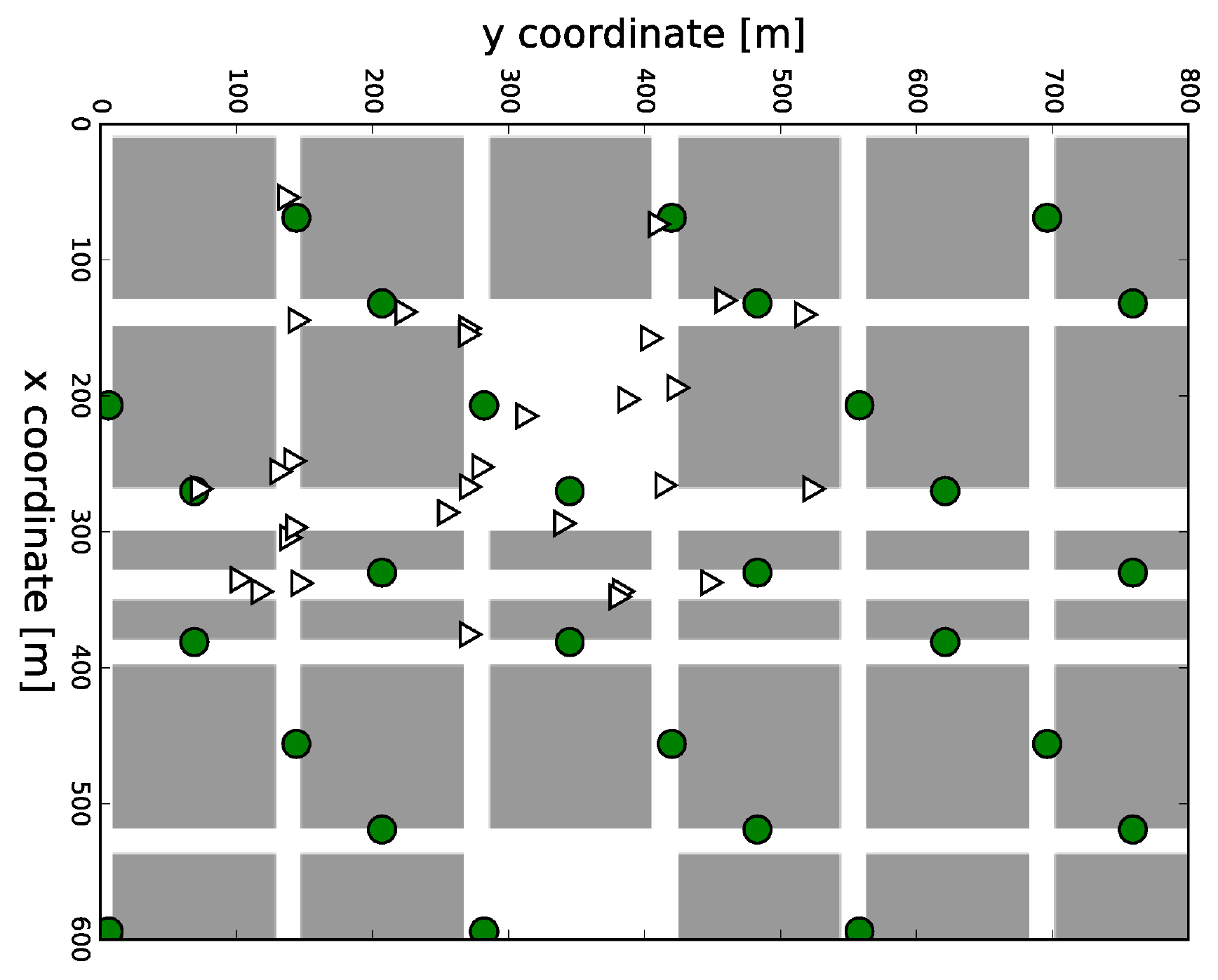}
		\caption{\small Madrid scenario of the METIS project. Buildings are represented by dark gray squares and rectangles. Areas in white color indicate the presence of streets or parks. Circles and triangles indicate the location of the microcells and users, respectively.}
		\label{fig:madrid}
	\end{center}
	\vspace{-0.5cm}
\end{figure}

The user priorities $\signal{w}=[w_1,\ldots,w_N]$ to build the concave mapping $T$ in \refeq{eq.cont_ext} are assigned as follows. We first compute the interference-free rate $b_j\in\real_{++}$ of each user $j\in\setn$ when transmitting at $1$~W; i.e., $b_j=\max_{i\in\setm} B\log_2(1+ {1}g_{i,j}/\sigma^2_i)$. We then use the weight vector $\signal{w}=\signal{b}/\|\signal{b}\|_\infty$, where $\signal{b}:=[b_1,\ldots,b_N]$. With this normalization, the solution $(\signal{p}^\star,c^\star)$ to Problem~\ref{problem.pro_fairness_canonical} has the following interpretation. The utility $c^\star$ is the highest rate observed in the network, and the weight $w_j\in~]0,1]$ is the fraction of this maximum rate that is assigned to user $j$. As described above, for $\bar{p}=1$, all users transmit with the same fraction of the rates they could achieve if alone in the system.

The above discussion shows that the sum throughput of the network is $\|\signal{w}\|_1 c^\star$. To evaluate the energy efficiency of the solution to Problem~\ref{problem.pro_fairness_canonical} with these settings, for given power budget $\bar{p}$, we use the $l_\infty$-energy function scaled by $\|\signal{w}\|_1$; i.e., we use $\|\signal{w}\|_1 E(\bar{p})$, where $E$ is the $\|\cdot\|_\infty$-energy efficiency function.  This scaled version of the $l_\infty$-energy efficiency, which has dimension bits/Joule, shows the ratio between the sum throughput in the network and the maximum observed transmit power $\bar{p}$ of a user. (We could use also the $l_1$-energy efficiency, in which case the energy efficiency is the rate achieved by a user for given total transmit power in the network.)

We approximate the solution to Problem~\ref{problem.pro_fairness_canonical} by applying 5,000 iterations of the fixed point algorithm in \refeq{eq.krause_iter} with the mapping $T$ in \refeq{eq.cont_ext}. The starting power allocation of the iterations is $\signal{p}_1=\signal{0}$. Fig.~\ref{fig:prop_div} and Fig.~\ref{fig:prop_div_c} show the energy efficiency and the utility (i.e., the rates) obtained with the simulation. For the bounds in Proposition~\ref{proposition.monotone_ee}, we use $\alpha_1=\alpha_2=\beta=1$. We can see that the results are consistent with the analysis in Sect.~\ref{sect.ee}. Even with a nonlinear mapping such as that in \refeq{eq.cont_ext}, we can clearly identify the two power regimes described in Sect.~\ref{sect.illustration}. In particular, with a power budget above $\|T(\signal{0})\|_\infty/\rho(\signal{M})$, which characterizes the high power regime, we note that increasing the power budget by orders of magnitude improves the achievable rates only marginally. We also observe a gap in the rate and energy efficiency bounds in the high SINR regime, which is expected because in this regime the bounds derived here are not necessarily asymptotically sharp for utility maximization problems with arbitrary interference mappings.

\begin{figure}
	\begin{center}
		\includegraphics[width=\figsize]{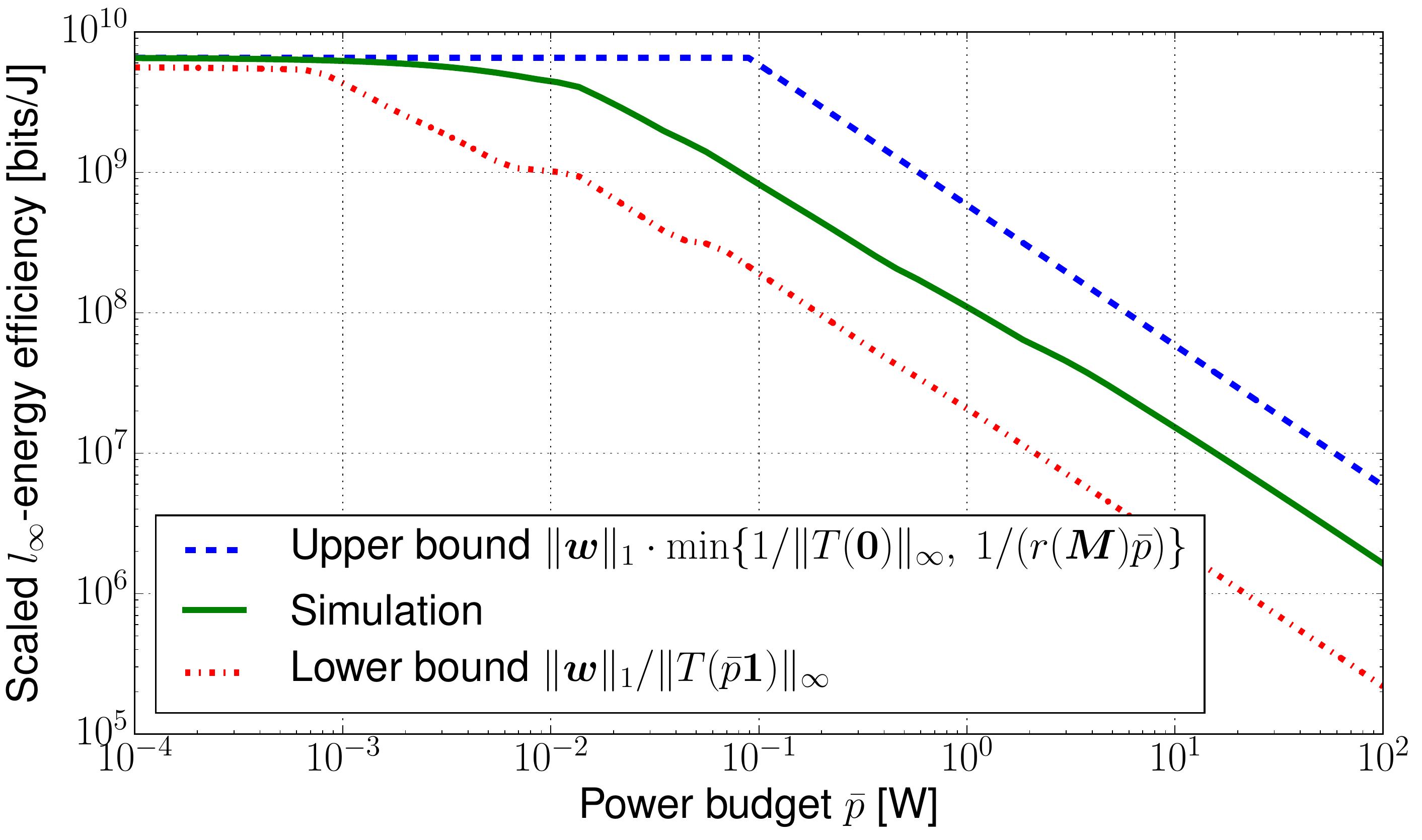}
		\caption{Energy efficiency as a function of the power budget $\bar{p}$.}
		\label{fig:prop_div}
	\end{center}
	\vspace{-0.5cm}
\end{figure}

\begin{figure}
	\begin{center}
		\includegraphics[width=\figsize]{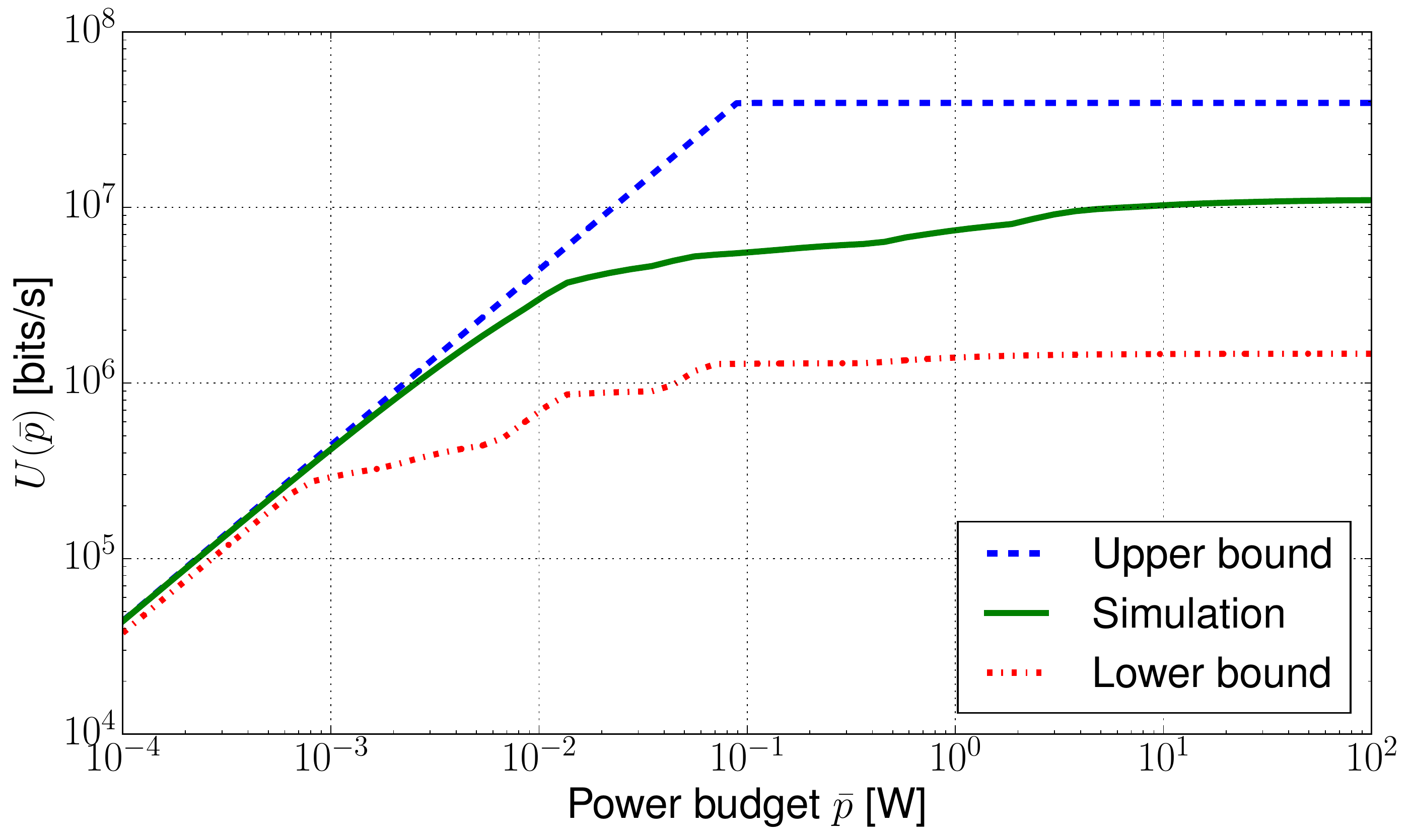}
		\caption{\small Highest rate as a function of the power budget .}
		\label{fig:prop_div_c}
	\end{center}
	\vspace{-0.5cm}
\end{figure}

\section{Summary and conclusions}
We have proved that the energy efficiency and the utility (e.g., rates or SINR) of solutions to a large class of network utility maximization problems are continuous and monotonic as a function of the power budget $\bar{p}$ available to the network. Furthermore, we used the concept of lower bounding matrices introduced in \cite{renato2016}, or, more generally, the concept of asymptotic or recession functions in convex analysis \cite{aus03}, to derive simple upper and lower bounds for the energy efficiency and for the network utility. In particular, the upper bounds reveal that the solutions are characterized by a low power regime and a high power regime, and the transition point is precisely known. In the low power regime, the upper bounds are asymptotically sharp (i.e., as the power budget tends to zero). The energy efficiency can decrease slowly as the power budget increases, whereas the utility grows linearly at best as a function of the power budget. In the high power regime, the typical behavior of interference-limited networks is that there are marginal gains in network utility and a fast decrease in energy efficiency as the power budget tends to infinity. In addition, the upper bounds we derived here are asymptotically sharp as the power budget diverges to infinity for the important family of network utility maximization problems constructed with affine interference mappings. 

The general theory developed here was illustrated with a novel joint uplink power control and base station assignment problem for weighted rate allocation. One of the main advantages of the  formulation is that it works directly with the rate of users, which is the parameter that network engineers are typically interested in maximizing. We showed that this problem can be solved optimally with an existing iterative method that, from a mathematical perspective, is nothing but a fixed point algorithm that solves a conditional eigenvalue problem. In this application, simulations show that the bounds are particularly good in the low power regime, and they are within the same order of magnitude of the optimal values in the high power regime. Therefore, the bounds derived here can serve as a simple estimate of the limits of a given network configuration. This fact can be especially useful in planning tasks that require the evaluation of multiple candidate configurations. With the bounds derived here, many inefficient configurations can be quickly ruled out without solving optimization problems.

\bibliographystyle{IEEEtran}
\bibliography{IEEEabrv,references}
\end{document}